\documentclass[reqno,10pt]{amsart}
\usepackage{graphicx}
\usepackage{amsmath}
\usepackage{amssymb} %
\usepackage{color}
\usepackage{xcolor}
\usepackage{ifpdf}
\usepackage{txfonts}
\usepackage{fleqn}
\usepackage[T1]{fontenc}
\usepackage[all]{xy}
\usepackage[colorlinks]{hyperref}
\usepackage{yfonts,eufrak}
 \usepackage{caption}
\usepackage{fancyhdr}
\usepackage [pagewise] {lineno}
\AtBeginDocument{\thanks{}}

\begin{document}

\title[Skew-constacyclic codes over $\frac{\mathbb{F}_q[v]}{\langle\,v^q-v\,\rangle}$]{ Skew-constacyclic
codes over $\frac{\mathbb{F}_q[v]}{\langle\,v^q-v\,\rangle}$}

\author[Joël  Kabore, A. Fotue-Tabue, Kenza Guenda,  Mohammed  E. Charkani]
{Joël  Kabore, Alexandre Fotue-Tabue, Kenza Guenda,  Mohammed  E.
Charkani}

\address{Joël  Kabore\newline
Department of Mathematics, University of Ouaga I Pr Joseph
Ki-Zerbo, Ouagadougou, Burkina-Faso} \email{jokabore@yahoo.fr}

\address{Alexandre Fotue-Tabue\newline
Department of mathematics, Faculty of Science,  University of
Yaounde 1, Cameroon} \email{alexfotue@gmail.com}

\address{Kenza Guenda\newline
Faculty of Engineering, University of Victoria, Canada}
\email{kguenda@uvic.ca}

\address{Mohammed  E. Charkani\newline
Department of Mathematics, Faculty of Science Dhar-Mahraz,
University of Sidi Mohamed Ben Abdellah, Fez-Atlas,30003, Morocco}
\email{mcharkani@gmail.com }

\maketitle \numberwithin{equation}{section}
\newtheorem{theo}{\bf Theorem}
\newtheorem{pro}{Proposition}[section]
\newtheorem{cor}{Corollary}
\newtheorem{lem}{Lemma}
\newtheorem{exa}{Example}
\newtheorem{rem}{Remark}[section]
\newtheorem{definition}{Definition}
\newtheorem{Notation}{Notation}
\newtheorem{Answer}{Answer}
\newtheorem{Question}{Question}
\newtheorem{Problem}{Problem}
\newtheorem{Algorithm}{{Algorithm}}
\newenvironment{Proof}[1][Proof]{\noindent {\bf Proof.\;}}{\qed}

\begin{abstract} In this paper, the investigation on the algebraic
structure of the ring
$\frac{\mathbb{F}_q[v]}{\langle\,v^q-v\,\rangle}$ and the
description of its automorphism group, enable to study the
algebraic structure of codes and their dual over this ring. We
explore the algebraic structure of skew-constacyclic codes,  by
using a linear Gray map and we determine their generator
polynomials. Necessary and sufficient conditions for the existence
of self-dual skew cyclic and self-dual skew negacyclic codes over
$\frac{\mathbb{F}_q[v]}{\langle\,v^q-v\,\rangle}$ are given.
 \vspace{0.15 cm}

\emph{Keywords}: Non-chain ring, Skew-constacyclic codes, Gray
map, Self-dual skew codes, Skew-multi-twisted codes.

\vspace{0.15cm}

\emph{AMS Subject Classification}: 16P10, 94B05, 94B60.
\end{abstract}

\section{Introduction}

Codes over finite commutative chain rings have been extensively
studied. In the last years, some special rings which are non-chain
rings are used as alphabet for codes. Recently, using a
(non-commutative) skew polynomial ring, which is a particular case
of well-known Öre polynomial rings, D. Boucher et al. developed
some classes of linear codes called skew-constacyclic codes
\cite{BGU, BSU,BU, BU2}. In analogous to classical constacyclic
codes, skew constacyclic codes over a finite commutative ring have
rich algebraic structures and can be seen as left-submodules  of a
certain class of module. These codes have received much attention
in recent years \cite{BL,JLU, SAS} and have been studied for
various automorphisms over some non-chain rings \cite{GMF, GSY,
MGGSS, ZW}.

 In \cite{AAS,AS}, Abualrub et al. studied cyclic, and skew cyclic codes over
$\mathbb{F}_2[v]/\langle\,v^2-v\,\rangle.$ Bayram et al.
investigated cyclic codes over
$\mathbb{Z}_3[v]/\langle\,v^3-v\,\rangle$ \cite{BS1} and
constacyclic codes over $\mathbb{F}_p[v]/\langle\,v^p-v\,\rangle$
\cite{BS}. In \cite{DCE}, Dertli et al. explored skew-constacyclic
and skew-quasi constacyclic codes over
$\mathbb{Z}_3[v]/\langle\,v^3-v\,\rangle.$ Recently, Shi et al.,
in \cite{SYS}, studied skew cyclic codes over
$\mathbb{F}_{p^s}[v]/\langle\,v^m-v\,\rangle$, where $p$ is a
prime and $m-1$ divides $p-1.$ The motivation for studying this
ring is lying under the facts that first this ring is as a natural
generalization of codes over the ring
$\mathbb{F}_{p}[v]/\langle\,v^p-v\,\rangle.$  Second important
fact regarding this ring is that a linear Gray map is defined and
hence codes over fields are obtained. To the best of our
knowledge, the study of skew constacyclic codes over
$\mathbb{F}_{q}[v]/\langle\,v^q-v\,\rangle$ has not been
considered by any coding scientist. Our objection is to
characterize self-dual skew-constacyclic codes over
$\mathbb{F}_{q}[v]/\langle\,v^q-v\,\rangle.$

This paper is organized as follows. In Section \ref{sec:2}, we
first give some properties about the ring
$\mathbb{F}_{q}[v]/\langle\,v^q-v\,\rangle$ and describe its
ring-automorphism group. In Section \ref{sec:3}, we explore linear
codes over $\mathbb{F}_{q}[v]/\langle\,v^q-v\,\rangle$ using a
linear Gray map and we show that the Gray-image of any skew
constacyclic code of length $n$ over
$\mathbb{F}_{q}[v]/\langle\,v^q-v\,\rangle$ under this linear Gray
map is a skew multi-twisted code of length $qn$ over
$\mathbb{F}_q.$ In Section \ref{sec:4}, we determine the structure
of skew-constacyclic codes and characterize self-dual
skew-constacyclic codes over
$\mathbb{F}_{q}[v]/\langle\,v^q-v\,\rangle.$ We also give
necessary and sufficient conditions for the existence of self-dual
skew cyclic and self-dual skew negacyclic codes over
$\mathbb{F}_{q}[v]/\langle\,v^q-v\,\rangle.$

\section{Ring-automorphism group of
$\mathbb{F}_{q}[v]/ \langle v^{q}-v \rangle$}\label{sec:2}

Throughout this paper, $\mathbb{F}_q$ is a finite field with $q$
elements and $\beta$ is a generator of multiplicative group
$\mathbb{F}_q\backslash\{0\}.$ Write $\mathbb{F}_q :=\{\alpha_0,
\alpha_1,\cdots,\alpha_{q-1}\}$ where $\alpha_0=0,
\alpha_i=\beta^{i}$ for all $0 \leq i \leq q-1.$ We denote the
ring $\mathbb{F}_{q}[v]/\langle\,v^q-v\,\rangle$ by $R_q$ and
$\mathcal{U}(R_q)$ its unit group. Obviously, $\mathbb{F}_q$ is a
subring of the ring $R_q,$ and $R_q$ is a vector space over
$\mathbb{F}_q$ with basis
$\{1,\overline{v},\cdots,\overline{v}^{q-1}\}$ where
$\overline{v}:=v+\langle\,v^q-v\,\rangle.$ Thus
$R_q=\mathbb{F}_q\oplus\mathbb{F}_q\overline{v}\oplus\cdots\oplus\mathbb{F}_q\overline{v}^{q-1}$
and any element $r$ in $R_q$ can be uniquely written as:
$r=r_0+r_1\overline{v}+\cdots+r_q \overline{v}^{q-1}$ with $r_i
\in \mathbb{F}_q$, for all $0 \leq i \leq q-1.$ Moreover, the ring
$R_q$ is a non-chain principal ideal ring with maximal ideals
$\langle\,v-\alpha_i\,\rangle/\langle\,v^q-v\,\rangle$ ($0\leq i
<q$), since the factorization of $v^q-v$ in $\mathbb{F}_q[v]$ is
$v^q-v=(v-\alpha_0)(v-\alpha_1)\cdots(v-\alpha_{q-1}).$

An element $\eta$ of $R_q$ is called \emph{idempotent} if
$\eta^2=\eta;$ two idempotents $\eta_1, \eta_2$ are said to be
\emph{orthogonal} if $\eta_1\eta_2=0.$ An idempotent of $R_q$ is
said to be \emph{primitive} if it is non-zero and it can not be
written as sum of non-zero orthogonal idempotents. A set
$\{\eta_0,\eta_1,\cdots ,\eta_{q-1}\}$ of idempotents of $R_q$ is
\emph{complete} if $\sum_{i=0}^{q-1} \eta_i=1.$ Let $f_i(v):= v -
\alpha_i,$ and $\widehat{f_i}(v) := \frac{v^q-v}{f_i(v)},$ where
$i = 0, 1,\cdots, q - 1.$ Then there exist $a_i(v)$ and  $b_i(v)$
in $\mathbb{F}_q[v]$ such that $a_i(v)f_i(v) +
b_i(v)\widehat{f_i}(v) = 1.$ Let $\eta_i =
b_i(\overline{v})\widehat{f_i}(\overline{v}),$ then $\eta_i^2 =
\eta_i,$ $\eta_i\eta_j = 0$ and $\sum\limits_{i=0}^{q-1}\eta_i =
1,$ where $0\leq i\neq j \leq q-1.$ It is easy to see that any
complete set of idempotents in $R_q$ is a basis of
$\mathbb{F}_q$-vector space. Therefore, any element $r$ of $R_q$
can be represented as: $r= r_0 \eta_0+ r_1 \eta_1+\cdots
+r_{q-1}\eta_{q-1}$ \cite{SYS}.

For $i=0,1,\cdots,q-1,$ we consider the map
\begin{align}\label{idi}
\begin{array}{cccc}
  \phi_i: & R_q & \rightarrow & \left(\mathbb{F}_q;+,\cdot; 1,0\right) \\
    & \sum\limits_{i=0}^{q-1}a_i\eta_{i} & \mapsto & a_i,
\end{array}
\end{align}
which is a ring-epimorphism. We denote by $\ast$ the componentwise
multiplication (or Schur product) and by $+$ the componentwise
addition on $(\mathbb{F}_q)^q,$ i.e. for $\textbf{x} :=
(x_0,x_1,\cdots,x_{q-1}), \textbf{y} :=
(y_0,y_1,\cdots,y_{q-1})\in (\mathbb{F}_q)^q,$ we put $$\textbf{x}
\ast \textbf{y} := (x_0y_0,x_1y_1,\cdots,x_{q-1}y_{q-1}) \in
(\mathbb{F}_q)^q ,$$
$$\textbf{x} + \textbf{y} :=
(x_0+y_0,x_1+y_1,\cdots,x_{q-1}+y_{q-1}) \in (\mathbb{F}_q)^q .$$
By the remainder Chinese theorem, the following map
\begin{align}\label{id}
\begin{array}{cccc}
  \phi: & R_q & \rightarrow & \left((\mathbb{F}_q)^q;+,\ast; \textbf{1},\textbf{0}\right) \\
    & a & \mapsto &
    \left(\phi_0(a),\phi_1(a),\cdots,\phi_{q-1}(a)\right),
\end{array}
\end{align} is a ring-isomorphism, where $\textbf{1}:=(1,1,\cdots,1)$ and
$\textbf{0}:=(0,0,\cdots,0).$ It is easy to see that the ring
$R_q$ is a principal ideal ring whose $2^q$ ideals are
$\mathrm{I}_{A}:=\langle\sum\limits_{i\in A}\eta_i\rangle$ where
$A$ is a subset of $\{0;1;\cdots;q-1\}.$ For any $a\in R_q,$ we
set $\texttt{Supp}(a):=\{i \in \{0;1;\cdots;q-1\} : \phi_i(a) \neq
0\};$ then $\langle\,a\,\rangle=I_{\texttt{Supp}(a)}$ and
$|\langle a\rangle|=q^{|\texttt{Supp}(a)|}.$ Moreover, the group
of units of $R_q$ is described as follows:
$$\mathcal{U}(R_q)=\left\{\sum\limits_{i=0}^{q-1}a_i\eta_{i}\;:\;a_i\neq
0\text{ for all }0\leq i \leq q-1\right\},$$ and so
$|\mathcal{U}(R_q)|=(q-1)^q.$ For instance, the ring $R_3$ has $8$
units given by: $1, 2, 1+v^2, 1+v+2v^2, 1+2v+2v^2, 2+v+v^2,
2+2v+v^2, 2+2 v^2$ \cite{BS1}.

\begin{lem}\label{idem} The ring $R_q$ admits
a unique complete set $\{\eta_0,\eta_1,\cdots ,\eta_{q-1}\}$ of
primitive pairwise orthogonal idempotents.
\end{lem}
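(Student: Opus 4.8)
The plan is to push the entire question through the ring-isomorphism $\phi$ of \eqref{id} and resolve it inside the product ring $\left((\mathbb{F}_q)^q;+,\ast\right)$, where idempotents are completely explicit. Since $\mathbb{F}_q$ is a field its only idempotents are $0$ and $1$, so an element $\textbf{e}=(e_0,\ldots,e_{q-1})$ of $(\mathbb{F}_q)^q$ is idempotent for $\ast$ precisely when each $e_i\in\{0,1\}$, i.e. $\textbf{e}$ is the indicator vector of a subset $S\subseteq\{0,\ldots,q-1\}$; two such vectors are orthogonal iff the corresponding subsets are disjoint, and $\textbf{1}=(1,\ldots,1)$ is the indicator of the whole index set.

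Next I would pin down the primitive idempotents of $(\mathbb{F}_q)^q$. If the support $S$ of an idempotent $\textbf{e}$ has at least two elements, write $S=S_1\sqcup S_2$ with $S_1,S_2$ nonempty: the indicator vectors of $S_1$ and $S_2$ are nonzero, orthogonal, idempotent, and sum to $\textbf{e}$, so $\textbf{e}$ is decomposable. Conversely, the indicator vector $\textbf{e}_i$ of a singleton $\{i\}$ cannot be split as a sum of two nonzero orthogonal idempotents, since their supports would be disjoint nonempty subsets of $\{i\}$. Hence the primitive idempotents of $(\mathbb{F}_q)^q$ are exactly $\textbf{e}_0,\ldots,\textbf{e}_{q-1}$, and $\sum_{k}\textbf{e}_{i_k}=\textbf{1}$ for pairwise distinct indices $i_k$ forces $\{i_k\}=\{0,\ldots,q-1\}$.

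Existence is then immediate from the computation preceding the statement: the elements $\eta_i=b_i(\overline{v})\widehat{f_i}(\overline{v})$ satisfy $\eta_i^2=\eta_i$, $\eta_i\eta_j=0$ $(i\neq j)$ and $\sum_i\eta_i=1$, and since they form an $\mathbb{F}_q$-basis of $R_q$ the very definition of $\phi_j$ in \eqref{idi} gives $\phi_j(\eta_i)=\delta_{ij}$, that is $\phi(\eta_i)=\textbf{e}_i$; by the previous paragraph each $\eta_i$ is primitive, so $\{\eta_0,\ldots,\eta_{q-1}\}$ is a complete system of pairwise orthogonal primitive idempotents. For uniqueness, let $\{\theta_0,\ldots,\theta_{m-1}\}$ be any such system; applying $\phi$ yields a complete, pairwise orthogonal family of primitive idempotents of $(\mathbb{F}_q)^q$, so each $\phi(\theta_k)=\textbf{e}_{i_k}$ with the $i_k$ pairwise distinct (orthogonality) and $\sum_k\textbf{e}_{i_k}=\textbf{1}$ (completeness); by the previous paragraph $m=q$ and $\{\phi(\theta_0),\ldots,\phi(\theta_{q-1})\}=\{\textbf{e}_0,\ldots,\textbf{e}_{q-1}\}=\{\phi(\eta_0),\ldots,\phi(\eta_{q-1})\}$. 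Applying $\phi^{-1}$ gives $\{\theta_0,\ldots,\theta_{m-1}\}=\{\eta_0,\ldots,\eta_{q-1}\}$, which is the assertion.

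I expect the only nonroutine ingredient to be the combinatorial description of the primitive idempotents of $(\mathbb{F}_q)^q$ in the middle paragraph; the rest is transport of structure along $\phi$ together with unwinding the defining identities of the $\eta_i$. A minor point worth stating explicitly is that orthogonality plus completeness in $(\mathbb{F}_q)^q$ already force a complete orthogonal system of primitive idempotents to have exactly $q$ members, so indexing the set as $\{\eta_0,\ldots,\eta_{q-1}\}$ loses no generality.
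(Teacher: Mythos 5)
Your proof is correct and follows essentially the same route as the paper: both transport the problem through the ring-isomorphism $\phi$ of \eqref{id} and identify the complete set of primitive pairwise orthogonal idempotents of $\left((\mathbb{F}_q)^q;+,\ast\right)$ with $\{e_0,\ldots,e_{q-1}\}$. The only difference is that the paper simply asserts this classification in the product ring, whereas you supply the (easy but worthwhile) combinatorial justification via indicator vectors of subsets.
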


\begin{Proof} The only complete set of primitive
pairwise orthogonal idempotents in the ring
$\left((\mathbb{F}_q)^q;+,\ast; \textbf{1},\textbf{0}\right)$ is
$\{e_0,e_1,\cdots ,e_{q-1}\}$ where $e_0:=(1,0,\cdots,0); $
$e_1:=(0,1,0,\cdots,0);$
$e_2:=(0,0,1,0,\cdots,0);\cdots;e_{q-1}:=(0,0,\cdots,0,1).$ From
the ring-isomorphism (\ref{id}), the set $\{\eta_0,\eta_1,\cdots
,\eta_{q-1}\}$ is also the complete set of primitive pairwise
orthogonal idempotents in the ring $R_q,$ where
$\phi(\eta_i)=e_i,$ for all $0\leq i\leq q-1.$
\end{Proof}

\begin{theo}
Let $\theta$ be a ring-automorphism of $\mathbb{F}_q$ and $\sigma$
is a permutation of $\{0,1,\cdots,q-1\}.$ Then the map
\begin{align}
\begin{array}{cccc}
\Theta_{\theta, \sigma}: & R_q& \rightarrow & R_q\\
&\sum\limits_{i=0}^{q-1}a_i\eta_{i} &\mapsto &
\sum\limits_{i=0}^{q-1}\theta(a_i)\eta_{\sigma(i)}.
\end{array}
\end{align}
is a ring-automorphism group of $R_q.$ Moreover
$$\texttt{Aut}(R_q)=\left\{\Theta_{\theta, \sigma}\,:\, \theta \in \texttt{Aut}(\mathbb{F}_q),
\sigma\in\mathbb{S}_{q}\right\},$$ where $\mathbb{S}_q$ is the
group of permutations of $\{0,1,\cdots,q-1\},$ and
$$|\texttt{Aut}(R_q)|=r\cdot q\cdot(q-1)\cdot(q-2)\cdot\ldots\cdot 2\cdot 1,$$ where $q=p^r$ with $p$ a prime number.
\end{theo}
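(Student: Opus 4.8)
The plan is to push the whole question through the ring-isomorphism $\phi$ of (\ref{id}): conjugation by $\phi$ identifies $\texttt{Aut}(R_q)$ with the ring-automorphism group of the product ring $(\mathbb{F}_q)^q$, so it suffices to understand the ring-automorphisms of $(\mathbb{F}_q)^q$ (a finite product of copies of $\mathbb{F}_q$) and then translate back through $\phi$.

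For the first assertion, that each $\Theta_{\theta,\sigma}$ is a ring-automorphism, I would argue directly from the idempotent decomposition $R_q=\bigoplus_{i=0}^{q-1}\mathbb{F}_q\eta_i$. Additivity is immediate, since $\theta$ is additive and $\{\eta_0,\dots,\eta_{q-1}\}$ is an $\mathbb{F}_q$-basis; the identity $1=\sum_i\eta_i$ is fixed because $\theta(1)=1$ and $\sigma$ merely reindexes; and multiplicativity follows from $\eta_i\eta_j=\delta_{ij}\eta_i$, which gives $\big(\sum_i a_i\eta_i\big)\big(\sum_j b_j\eta_j\big)=\sum_i a_ib_i\eta_i$, so that both $\Theta_{\theta,\sigma}$ of the product and the product of the images equal $\sum_i\theta(a_i)\theta(b_i)\,\eta_{\sigma(i)}$. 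Finally $\Theta_{\theta^{-1},\sigma^{-1}}$ is a two-sided inverse, so $\Theta_{\theta,\sigma}$ is bijective. Equivalently, under $\phi$ the map $\Theta_{\theta,\sigma}$ becomes $(x_0,\dots,x_{q-1})\mapsto(\theta(x_{\sigma^{-1}(0)}),\dots,\theta(x_{\sigma^{-1}(q-1)}))$, which is visibly an automorphism of $(\mathbb{F}_q)^q$.

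The substance is the converse: every $\Psi\in\texttt{Aut}(R_q)$ equals some $\Theta_{\theta,\sigma}$. Here I would use Lemma \ref{idem}. A ring-automorphism sends idempotents to idempotents, preserves orthogonality, and preserves the relation $\sum_i\eta_i=1$, hence it carries the complete set of primitive pairwise orthogonal idempotents to another such set; by the uniqueness in Lemma \ref{idem}, that set is again $\{\eta_0,\dots,\eta_{q-1}\}$, so there is $\sigma\in\mathbb{S}_q$ with $\Psi(\eta_i)=\eta_{\sigma(i)}$. Then $\Psi$ maps the minimal ideal $\mathbb{F}_q\eta_i$ isomorphically onto $\mathbb{F}_q\eta_{\sigma(i)}$; reading these one-dimensional subrings off via the epimorphisms $\phi_i$ and $\phi_{\sigma(i)}$ of (\ref{idi}) turns $\Psi|_{\mathbb{F}_q\eta_i}$ into a ring-automorphism $\theta$ of $\mathbb{F}_q$, i.e. $\Psi(a\eta_i)=\theta(a)\eta_{\sigma(i)}$; summing over $i$ and using additivity gives $\Psi=\Theta_{\theta,\sigma}$.

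The final step is the cardinality. The assignment $(\theta,\sigma)\mapsto\Theta_{\theta,\sigma}$ is injective: evaluating $\Theta_{\theta,\sigma}$ on $\eta_i$ recovers $\sigma$, and on $\beta\eta_0$ recovers $\theta(\beta)$, hence $\theta$, since $\beta$ generates $\mathbb{F}_q\backslash\{0\}$. Therefore $|\texttt{Aut}(R_q)|=|\texttt{Aut}(\mathbb{F}_q)|\cdot|\mathbb{S}_q|=r\cdot q!$, using the standard fact that $\texttt{Aut}(\mathbb{F}_q)$ is cyclic of order $r$ (generated by the Frobenius $x\mapsto x^p$) when $q=p^r$, and $|\mathbb{S}_q|=q!$. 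The delicate point — the place where I expect to have to be most careful — is inside the converse step, namely the claim that the ring-automorphism induced on each component $\mathbb{F}_q\eta_i$ is the \emph{same} $\theta$ for all $i$; this is precisely what pins $\texttt{Aut}(R_q)$ down to the "diagonal" family $\{\Theta_{\theta,\sigma}\}$ and what makes the count come out to $r\cdot q!$, so it is the hinge on which the whole theorem turns.
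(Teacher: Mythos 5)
Your forward direction and your use of Lemma \ref{idem} to produce the permutation $\sigma$ are both fine, and they match the paper's argument. The genuine gap is exactly the point you flag at the end and then do not close: after establishing $\Psi(\eta_i)=\eta_{\sigma(i)}$ and $\Psi(a\eta_i)=\theta_i(a)\eta_{\sigma(i)}$ for some field automorphism $\theta_i$ depending on $i$, you need all the $\theta_i$ to coincide, and nothing in your argument forces this. In fact it is false in general. Under the isomorphism $\phi$ of (\ref{id}) the ring $R_q$ is just $(\mathbb{F}_q)^q$, and any permutation $\sigma\in\mathbb{S}_q$ together with an arbitrary tuple $(\theta_0,\dots,\theta_{q-1})\in\texttt{Aut}(\mathbb{F}_q)^q$ yields a ring-automorphism $(x_0,\dots,x_{q-1})\mapsto\left(\theta_0(x_{\sigma^{-1}(0)}),\dots,\theta_{q-1}(x_{\sigma^{-1}(q-1)})\right)$; the automorphism group is the wreath product $\texttt{Aut}(\mathbb{F}_q)\wr\mathbb{S}_q$, of order $r^q\cdot q!$, not $r\cdot q!$. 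Concretely, for $q=4$ the map of $(\mathbb{F}_4)^4$ sending $(x_0,x_1,x_2,x_3)$ to $(x_0,x_1,x_2,x_3^2)$ is a ring-automorphism fixing every $\eta_i$ but acting as the identity on $\mathbb{F}_4\eta_0$ and as the Frobenius on $\mathbb{F}_4\eta_3$, so it is not any $\Theta_{\theta,\sigma}$; hence $|\texttt{Aut}(R_4)|=2^4\cdot 4!=384$ rather than $48$.

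The statement is therefore correct only when $q$ is prime (so $r=1$ and $r^q=r$), or if one restricts attention to automorphisms preserving the diagonally embedded subfield $\mathbb{F}_q\subseteq R_q$ --- which is what the paper's own proof silently assumes when it asserts that ``the restriction of $\Theta$ over $\mathbb{F}_q$ is a ring-automorphism of $\mathbb{F}_q$'': a ring-automorphism of $R_q$ need not map $\mathbb{F}_q\cdot 1$ into itself, as the example above shows (it sends $\alpha$ to $\alpha\eta_0+\alpha\eta_1+\alpha\eta_2+\alpha^2\eta_3\notin\mathbb{F}_4$). So your instinct about where the difficulty lies was exactly right, but the difficulty is not a missing lemma: it is a counterexample, and no completion of your argument can recover the stated description of $\texttt{Aut}(R_q)$ or the count $r\cdot q!$ for non-prime $q$.
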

\begin{Proof} First, for all $\theta\in\texttt{Aut}(\mathbb{F}_q)$ and
$\sigma\in\mathbb{S}_{q},$ it is obvious to check that
$\Theta_{\theta, \sigma}$ is a ring-automorphism of $R_q.$
Therefore $\left\{\Theta_{\theta, \sigma}\,:\, \theta \in
\texttt{Aut}(\mathbb{F}_q),
\sigma\in\mathbb{S}_{q}\right\}\subseteq\texttt{Aut}(R_q).$
Inversely, if $\Theta\in\texttt{Aut}(R_q)$ then it is clear that
the restriction of $\Theta$ over $\mathbb{F}_q$ is a
ring-automorphism $\theta$ of $\mathbb{F}_q.$ Thus for any
$a:=\sum\limits_{i=0}^{q-1}a_i\eta_{i}$ in $R_q,$ we have
$\Theta(a)=\sum\limits_{i=0}^{q-1}\theta(a_i)\Theta(\eta_{i}).$
Now the set $$\{\Theta(\eta_{i})\;:\;0\leq i\leq q-1\}$$ is
another complete set of primitive pairwise orthogonal idempotents
in $R_q.$ From Lemma\,\ref{idem}, it follows that there is a
permutation $\sigma$ of $\{0,1,\cdots,q-1\}$ such that
$\Theta(\eta_{i})=\eta_{\sigma(i)}.$ Hence
$\Theta(a)=\sum\limits_{i=0}^{q-1}\theta(a_i)\eta_{\sigma(i)}$ and
$\texttt{Aut}(R_q)=\left\{\Theta_{\theta, \sigma}\,:\, \theta \in
\texttt{Aut}(\mathbb{F}_q), \sigma\in\mathbb{S}_{q}\right\}.$
Finally, $\Theta_{\theta', \sigma'}\circ\Theta_{\theta,
\sigma}=\Theta_{\theta'\circ\theta, \sigma'\circ\sigma},$ for all
$\theta',\theta\in\texttt{Aut}(\mathbb{F}_q)$ and
$\sigma',\sigma\in\mathbb{S}_{q}.$
\end{Proof}

\begin{exa} The ring-automorphism group of $R_2$ and $R_4:$
\begin{enumerate}
\item The idempotents in $R_2$ are $\eta_0:=\overline{v}$ and
$\eta_1:=\overline{v}+1.$ Since
$\mathbb{S}_2:=\{\texttt{id},\sigma:=(01)\}$ and
$\texttt{Aut}(\mathbb{F}_2)=\{\texttt{Id}\}$, out of the identity
map, the only ring-automorphism over $R_2$ is given by:
$$\begin{array}{c c c c}
\Theta_{\texttt{Id}, \sigma}: & R_2& \longrightarrow & R_2\\
& a\eta_0+ b\eta_1&\longmapsto & a\eta_1+ b\eta_0.
\end{array}
$$
The automorphism $\Theta_{\texttt{Id}, \sigma}$ is used in
\cite{AAS} to study skew $\Theta_{\texttt{Id}, \sigma}$-cyclic
codes over $R_2.$
 \item The complete set of
primitive pairwise orthogonal idempotents of $R_4$ is given by:
$$\eta_0=\overline{v}^3+1; \eta_1= \overline{v}^3+\overline{v}+1; \eta_2=\overline{v}^3+\alpha \overline{v}^2+\alpha^2\overline{v};
\eta_4=\overline{v}^3+\alpha^2\overline{v}^2+\alpha
\overline{v}.$$  A ring-automorphism over $R_4$ is given by:
$$\begin{array}{c c c c}
\Theta_{\theta, \sigma_i}: & R_4& \longrightarrow & R_4\\
& \sum\limits_{i=0}^{3}a_i\eta_i &\longmapsto &
\sum\limits_{i=0}^{3}\theta(a_i)\eta_{\sigma(i)},
\end{array}
$$ for all $\theta\in\texttt{Aut}(\mathbb{F}_4)$ and $\sigma\in\mathbb{S}_4.$ Thus there exist exactly $48$ automorphisms over $R_4.$
\end{enumerate}
\end{exa}

\section{Linear Gray map and linear codes over $R_q$}\label{sec:3}

The ring-morphism $\phi_i: R_q \rightarrow \mathbb{F}_q $ defined
in (\ref{idi}) is naturally extended to $(R_q)^n$ as follows:
\begin{align}\begin{array}{c c c c}
\Phi_i:& (R_q)^n & \rightarrow & (\mathbb{F}_q)^{n}\\
& \left(a_0,a_1,\cdots,a_{n-1}\right) & \mapsto &
\left(\phi_i(a_0),\phi_i(a_1),\cdots,\phi_i(a_{n-1})\right),
\end{array}
\end{align}
which is an epimorphism of vector spaces over $\mathbb{F}_q.$ We
define a linear Gray map over $R_q$ to be
\begin{align}\begin{array}{c c c c}
\Phi:& (R_q)^n& \rightarrow & \left((\mathbb{F}_q)^n\right)^{q}\\
& \textbf{a} &\mapsto & \left(\Phi_0(\textbf{a}),
\Phi_1(\textbf{a}),\cdots,\Phi_{q-1}(\textbf{a})\right).
\end{array}
\end{align}
This map $\Phi:(R_q)^n \rightarrow
\left((\mathbb{F}_q)^n\right)^{q}$ is an isomorphism of vector
spaces over $\mathbb{F}_q.$

\begin{definition}
The Gray weight of any element $\textbf{a}$ in $(R_q)^n$ is
defined as: $W_G(\textbf{a})= W_{H}(\Phi(\textbf{a})),$ where
$W_H$ denotes the Hamming weight over $\mathbb{F}_q.$ The Gray
distance between two elements $\textbf{a}_1, \textbf{a}_2$ in
$(R_q)^n$ is given by: $d_{G}(\textbf{a}_1,
\textbf{a}_2)=W_G(\textbf{a}_1-\textbf{a}_2).$
\end{definition}

It is obvious that the linear Gray map $\Phi$ is a weight
preserving map from $((R_q)^n, W_G)$ to
$\biggl(\left((\mathbb{F}_q)^n\right)^{q}, W_H\biggr)$ and
$W_G(\textbf{a})=\sum\limits_{i=0}^{q-1}W_{H}(\Phi_i(\textbf{a}))$
for $\textbf{a}\in (R_q)^n.$

A \emph{code} $C$ of length $n$ over $R_q$ is a nonempty subset of
$(R_q)^n.$ If in addition the code is a submodule of $R^n_q$, it
is called \emph{linear code}. The Euclidean inner product between
two elements $\textbf{a}=(a_0, a_1,\cdots ,a_{n-1}),$ and
$\textbf{b}=(b_0,b_1,\cdots, b_{n-1})$ in $(R_q)^n$ is defined by:
$\textbf{a}\cdot \textbf{b}= \sum\limits_{i=0}^{n-1} a_i b_i.$ The
\emph{Euclidean dual code} (shortly dual code) of a code $C$ of
length $n$ over $R_q$ is defined as: $$C^{\perp
}=\left\{\textbf{a} \in (R_q)^n~ ; ~ \textbf{a}\cdot \textbf{b}=0,
\text{ for all }~ \textbf{b} \in C\right\}.$$ Recall that
Euclidean dual of linear code  of length $n$ over $R_q$ is a
linear code  of length $n$ over $R_q.$

\begin{exa} The linear codes of length $1$ over $R_q$ are
ideals $I_A$ of  $R_q,$ where $A$ is a subset of
$\{0;1;\cdots;q-1\}.$ Thus for all $a\in R_q,$ $W_G(a)=|A|$ and
$\langle
a\rangle^{\perp}=\left\langle\sum\limits_{i\in\overline{A}}\eta_i\right\rangle$
where $A:=\texttt{Supp}(a)$ and
$\overline{A}:=\left\{i\in\{0;1;\cdots;q-1\}\;:\;i\not\in
A\right\}.$ Therefore $I_A$ is a linear code of length $1$ over
$R_q$ with Gray weight $|A|$ and $\Phi(I_A)$ is a $[q,1,|A|]$-code
over $\mathbb{F}_q.$
\end{exa}

\begin{pro}\label{pro:self}
Let $C$ be a linear code of length $n$ over $R_q.$ Then
$\Phi(C^\perp)=\left(\Phi(C)\right)^\perp,$ where
$\left(\Phi(C)\right)^\perp$ denotes the ordinary dual of
$\Phi(C)$ as a linear code over $\mathbb{F}_q.$ Moreover, $C$ is a
self-dual code of length $n$ over $R_q$ if and only if $\Phi(C)$
is a self-dual code of length $qn$ over $\mathbb{F}_q.$
\end{pro}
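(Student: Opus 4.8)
The plan is to exploit the explicit coordinate description of $\Phi$ via the ring-isomorphism $\phi\colon R_q\to(\mathbb F_q)^q$ and to reduce everything to a statement about the orthogonality of the components. First I would recall that, under $\phi$, writing $a=\sum_{i=0}^{q-1}a_i\eta_i$ gives $\phi_i(a)=a_i$ and that the $\eta_i$ are pairwise orthogonal idempotents with $\sum_i\eta_i=1$; consequently, for any $a,b\in R_q$ one has $ab=\sum_{i=0}^{q-1}a_ib_i\eta_i$, so $\phi_i(ab)=\phi_i(a)\phi_i(b)$. Extending coordinatewise to length $n$, the key identity is
\begin{equation}\label{eq:innerprod-split}
\textbf{a}\cdot\textbf{b}=\sum_{i=0}^{q-1}\Bigl(\Phi_i(\textbf{a})\cdot\Phi_i(\textbf{b})\Bigr)\,\eta_i,
\end{equation}
for all $\textbf{a},\textbf{b}\in(R_q)^n$, where on the right the inner products $\Phi_i(\textbf{a})\cdot\Phi_i(\textbf{b})$ are taken in $\mathbb F_q$. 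This follows by applying $\phi_i$ termwise to $\textbf{a}\cdot\textbf{b}=\sum_{j}a_jb_j$ and using that $\{\eta_i\}$ is a basis. In particular $\textbf{a}\cdot\textbf{b}=0$ in $R_q$ if and only if $\Phi_i(\textbf{a})\cdot\Phi_i(\textbf{b})=0$ in $\mathbb F_q$ for every $i\in\{0,1,\dots,q-1\}$, because the $\eta_i$ are linearly independent.

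Next I would use \eqref{eq:innerprod-split} to prove $\Phi(C^\perp)\subseteq\bigl(\Phi(C)\bigr)^\perp$: if $\textbf{a}\in C^\perp$, then for every $\textbf{b}\in C$ and every $i$ we get $\Phi_i(\textbf{a})\cdot\Phi_i(\textbf{b})=0$; since the standard inner product on $\bigl((\mathbb F_q)^n\bigr)^q$ of $\Phi(\textbf{a})=(\Phi_0(\textbf{a}),\dots,\Phi_{q-1}(\textbf{a}))$ with $\Phi(\textbf{b})$ is exactly $\sum_{i=0}^{q-1}\Phi_i(\textbf{a})\cdot\Phi_i(\textbf{b})$, this vanishes, so $\Phi(\textbf{a})\perp\Phi(\textbf{b})$; as $\textbf{b}$ ranges over $C$, $\Phi(\textbf{b})$ ranges over $\Phi(C)$, whence $\Phi(\textbf{a})\in(\Phi(C))^\perp$. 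For the reverse inclusion I would compare dimensions: $\Phi$ is an $\mathbb F_q$-vector space isomorphism from $(R_q)^n$ onto $\bigl((\mathbb F_q)^n\bigr)^q\cong(\mathbb F_q)^{qn}$, so $\dim_{\mathbb F_q}\Phi(C^\perp)=\dim_{\mathbb F_q}C^\perp$. Here I need that $\dim_{\mathbb F_q}C+\dim_{\mathbb F_q}C^\perp=qn$; this can be obtained by decomposing $C$ through $\phi$ as $C\cong C_0\times\cdots\times C_{q-1}$ with each $C_i=\Phi_i(C)$ a linear code over $\mathbb F_q$, noting $C^\perp$ decomposes as $C_0^\perp\times\cdots\times C_{q-1}^\perp$ (a direct consequence of \eqref{eq:innerprod-split}), and using $\dim C_i+\dim C_i^\perp=n$ for each $i$. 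Since $\dim(\Phi(C))^\perp=qn-\dim\Phi(C)=qn-\dim C=\dim C^\perp=\dim\Phi(C^\perp)$ and one inclusion already holds, the two spaces coincide.

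Finally, the self-duality statement is immediate from $\Phi(C^\perp)=(\Phi(C))^\perp$: $C=C^\perp$ forces $\Phi(C)=\Phi(C^\perp)=(\Phi(C))^\perp$, and conversely if $\Phi(C)=(\Phi(C))^\perp=\Phi(C^\perp)$ then injectivity of $\Phi$ gives $C=C^\perp$; the length of $\Phi(C)$ over $\mathbb F_q$ is $qn$ by construction of the Gray map. The main obstacle, such as it is, is organizing the dimension count cleanly — one must be slightly careful that $C$ need only be a submodule of $(R_q)^n$, not necessarily a free module, so the cardinality identity $|C|\cdot|C^\perp|=|R_q|^n=q^{qn}$ (equivalently the additive/$\mathbb F_q$-dimension identity) is best justified via the CRT decomposition $R_q\cong(\mathbb F_q)^q$ and the coordinatewise splitting of both $C$ and the inner product in \eqref{eq:innerprod-split}, rather than by invoking a generator-matrix argument that presumes freeness. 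Everything else is a routine unwinding of definitions.
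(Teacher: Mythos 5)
Your proof is correct, and its skeleton matches the paper's: establish the inclusion $\Phi(C^\perp)\subseteq(\Phi(C))^\perp$ from the fact that $\textbf{a}\cdot\textbf{b}=0$ in $R_q$ forces $\Phi(\textbf{a})\cdot\Phi(\textbf{b})=0$ in $(\mathbb F_q)^{qn}$, then upgrade to equality by a size count, and deduce the self-duality equivalence from injectivity of $\Phi$. The one place you genuinely diverge is in how the size count is justified. The paper simply invokes the fact that $R_q$ is a Frobenius ring, so $|C|\,|C^\perp|=|R_q|^n$ by Wood's duality theorem, and stops there. You instead derive this identity from first principles: the splitting $\phi_i(ab)=\phi_i(a)\phi_i(b)$ gives your identity $\textbf{a}\cdot\textbf{b}=\sum_i\bigl(\Phi_i(\textbf{a})\cdot\Phi_i(\textbf{b})\bigr)\eta_i$, hence $C^\perp$ decomposes componentwise as $C_0^\perp\times\cdots\times C_{q-1}^\perp$, and the count follows from $\dim C_i+\dim C_i^\perp=n$ over the field $\mathbb F_q$. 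This is essentially the computation the paper postpones to its Proposition~\ref{pro:Cself}; pulling it forward makes your argument self-contained and avoids the black-box citation, and your closing remark about not assuming freeness of $C$ is a fair caution. In fact your componentwise decomposition of $C^\perp$ is strong enough to finish even more directly: since $\Phi(C)=C_0\times\cdots\times C_{q-1}$ and the ambient inner product on $\bigl((\mathbb F_q)^n\bigr)^q$ is the orthogonal sum of the block inner products, one has $(\Phi(C))^\perp=C_0^\perp\times\cdots\times C_{q-1}^\perp=\Phi(C^\perp)$ outright, with no need for the inclusion-plus-dimension step at all. Either way, the argument is sound.
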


\begin{proof} For all $\textbf{a}$ and $\textbf{b}$ in $(R_q)^n,$
we have $$ \textbf{a}\cdot\textbf{b}=0~~\Rightarrow ~~
\Phi(\textbf{a})\cdot\Phi(\textbf{b})=0, $$ where
$\Phi(\textbf{a})\cdot\Phi(\textbf{b})$ denotes the usual standard
inner product in $\left((\mathbb{F}_q)^n\right)^{q}.$ An immediate
consequence is the inclusion:
$$\Phi(C^\perp)\subseteq\left(\Phi(C)\right)^\perp.$$
Combining this with the fact that in Frobenius rings,
$|C||C^\perp|=|R_q|^n,$ we get
$\Phi(C^\perp)=\left(\Phi(C)\right)^\perp.$ Finally, since $\Phi$
is an isomorphism of vector spaces over $\mathbb{F}_q,$ the
equivalence is an immediate consequence of the equality
$\Phi(C^\perp)=\left(\Phi(C)\right)^\perp.$
\end{proof}

Since $R_q=\eta_0 R_q \oplus \eta_1 R_q \oplus \cdots \oplus
\eta_{q-1} R_q$, it follows that $$(R_q)^n=\eta_0 (R_q)^n \oplus
\eta_1 (R_q)^n \oplus \cdots \oplus \eta_{q-1} (R_q)^n.$$ Let $C$
be a linear code of length $n$ over $R_q$ and
$\textbf{a}=(a_0,a_1,\cdots,a_{n-1})\in C.$ Then $a_i=
\sum\limits_{j=0}^{q-1}\eta_j\phi_j(a_i)$ and
$\textbf{a}=\sum\limits_{i=0}^{q-1}\eta_i\Phi_i(\textbf{a})=
\sum\limits_{i=0}^{q-1}\eta_i \textbf{a}_i,$ where $\textbf{a}_i:=
\Phi_i(\textbf{a}).$ We let
\begin{align}\label{ci}C_i:=\Phi_i\left( C \right),\end{align} for $0\leq i\leq q-1.$
For this, it is straightforward to see that $C_0,
C_1,\cdots,C_{q-1}$ are linear codes of length $n$ over
$\mathbb{F}_q$ and $$C= \eta_0 C_0 \oplus \eta_1 C_1
\oplus\cdots\oplus \eta_{q-1}C_{q-1}.$$

\begin{pro}\label{pro:Cself}
Let $C= \eta_0 C_0 \oplus \eta_1 C_1 \oplus\cdots\oplus
\eta_{q-1}C_{q-1}$ be a linear code over $R_q$ of length $n.$ Then
$$C^{\perp }= \eta_0 C_0^{\perp } \oplus \eta_1 C_1^{\perp } \oplus\cdots\oplus \eta_{q-1}C_{q-1}^{\perp
}.$$  Moreover, $C$ is a self-dual code of length $n$ over $R_q$
if and only if $C_i$ is a self-dual code of length $n$ over
$\mathbb{F}_q$ for all $0\leq i \leq q-1.$
\end{pro}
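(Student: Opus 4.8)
The plan is to exploit the direct-sum decomposition $R_q=\bigoplus_{i=0}^{q-1}\eta_i R_q$ together with the orthogonality relations $\eta_i\eta_j=\delta_{ij}\eta_i$, which reduce everything to a component-by-component computation. First I would check the inclusion $\eta_0 C_0^\perp\oplus\cdots\oplus\eta_{q-1}C_{q-1}^\perp\subseteq C^\perp$. Take $\textbf{b}=\sum_{i=0}^{q-1}\eta_i\textbf{b}_i$ with each $\textbf{b}_i\in C_i^\perp$, and an arbitrary $\textbf{a}=\sum_{j=0}^{q-1}\eta_j\textbf{a}_j\in C$ with $\textbf{a}_j\in C_j$. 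Then $\textbf{a}\cdot\textbf{b}=\sum_{i,j}\eta_i\eta_j(\textbf{a}_j\cdot\textbf{b}_i)=\sum_{i=0}^{q-1}\eta_i(\textbf{a}_i\cdot\textbf{b}_i)=0$, since each Euclidean inner product $\textbf{a}_i\cdot\textbf{b}_i$ vanishes by $\textbf{b}_i\in C_i^\perp$. Note here I am using that the coefficients $\textbf{a}_j=\Phi_j(\textbf{a})$ of an arbitrary codeword of $C$ range exactly over all of $C_j$, which is precisely the content of the decomposition $C=\eta_0C_0\oplus\cdots\oplus\eta_{q-1}C_{q-1}$ established just before the statement.

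For the reverse inclusion $C^\perp\subseteq\eta_0 C_0^\perp\oplus\cdots\oplus\eta_{q-1}C_{q-1}^\perp$, I would proceed by a cardinality argument, exactly as in the proof of Proposition~\ref{pro:self}. Since $R_q$ is a Frobenius (indeed principal ideal) ring, $|C|\cdot|C^\perp|=|R_q|^n=q^{qn}$. On the other hand, because $\Phi_i$ restricted to $\eta_i(R_q)^n$ is an $\mathbb{F}_q$-isomorphism onto $(\mathbb{F}_q)^n$, the decomposition $C=\bigoplus_i\eta_iC_i$ gives $|C|=\prod_{i=0}^{q-1}|C_i|$, and similarly $\big|\bigoplus_i\eta_iC_i^\perp\big|=\prod_{i=0}^{q-1}|C_i^\perp|=\prod_{i=0}^{q-1}q^{n}/|C_i|=q^{qn}/|C|=|C^\perp|$. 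Combined with the one-sided inclusion already proved, the two sets of equal cardinality must coincide, giving $C^\perp=\eta_0C_0^\perp\oplus\cdots\oplus\eta_{q-1}C_{q-1}^\perp$.

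Finally, the self-duality statement is an immediate corollary: $C$ is self-dual iff $C=C^\perp$, iff $\eta_0C_0\oplus\cdots\oplus\eta_{q-1}C_{q-1}=\eta_0C_0^\perp\oplus\cdots\oplus\eta_{q-1}C_{q-1}^\perp$. Since the sum is direct and the $\eta_i$ are a complete set of pairwise orthogonal idempotents, applying $\Phi_i$ to both sides isolates the $i$-th component and yields $C_i=C_i^\perp$ for every $i$; conversely if each $C_i=C_i^\perp$ then the two direct sums agree term by term. One should note $qn$ is even (indeed $q$ is a prime power, and for each $C_i$ to be self-dual over $\mathbb{F}_q$ one needs $n$ even or $q$ even), but this is automatic once we assume self-dual codes exist and need not be belabored.

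I do not anticipate a serious obstacle: the only point requiring a little care is justifying $|C|=\prod_i|C_i|$, i.e.\ that the map $\textbf{a}\mapsto(\Phi_0(\textbf{a}),\dots,\Phi_{q-1}(\textbf{a}))$ restricted to $C$ is a bijection onto $C_0\times\cdots\times C_{q-1}$; this follows from the ring-isomorphism $\phi$ of (\ref{id}) applied coordinatewise, which shows $(R_q)^n\cong((\mathbb{F}_q)^n)^q$ as $\mathbb{F}_q$-modules and carries $\eta_iC_i$ onto the $i$-th factor. Everything else is bookkeeping with the orthogonal idempotents.
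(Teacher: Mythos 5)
Your proposal is correct and follows essentially the same route as the paper: the easy inclusion $\bigoplus_i\eta_iC_i^\perp\subseteq C^\perp$ (which you spell out via the idempotent orthogonality, where the paper simply calls it clear), followed by the cardinality count $|\bigoplus_i\eta_iC_i^\perp|=\prod_i q^n/|C_i|=q^{qn}/|C|=|C^\perp|$ using the Frobenius property of $R_q$, and then reading off self-duality componentwise via the $\Phi_i$. No gaps; the aside about parity of $n$ is unnecessary but harmless.
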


\begin{Proof} Let $D=\eta_0 C_0^{\perp } \oplus \eta_1
C_1^{\perp }\oplus\cdots\oplus \eta_{q-1}C_{q-1}^{\perp }.$ It is
clear that $D \subseteq C^{\perp }.$ Since $R_q$ is a principal
ideal ring, consequently $R_q$ is a Frobenius ring, then from
\cite{W}, we have $|C^{\perp
}|=\frac{|R_q|^n}{|C|}=\frac{q^{qn}}{|C|}.$ Thus
\begin{eqnarray*}
 |D| &=& |C_0^{\perp }||C_1^{\perp }|\cdots|C_{q-1}^{\perp }| \\
    &=& \frac{|\mathbb{F}_q|^n}{|C_0|}\frac{|\mathbb{F}_q|^n}{|C_1|}\cdots\frac{|\mathbb{F}_q|^n}{|C_{q-1}|} \\
    &=& \frac{q^{qn}}{|C_0||C_1|\cdots|C_{q-1}|}\\
    &=& \frac{q^{qn}}{|C|}=|C^{\perp }|.
\end{eqnarray*}
Hence $C^{\perp }=D.$

On the other hand, if $C$ is self-dual, then for all $0\leq i\leq
q-1,$ the code $C_i$ is also self-dual, by Definition (\ref{ci})
of the code $C_i.$ Inversely, if each $C_i$ is also self-dual,
then $C$ is also self-dual, since $C= \eta_0 C_0 \oplus \eta_1 C_1
\oplus\cdots\oplus \eta_{q-1}C_{q-1}.$
\end{Proof}

Let   $\textbf{v}_1, \cdots , \textbf{v}_k$ be vectors in
$(R_q)^n.$ The vectors  $\textbf{v}_1, \cdots , \textbf{v}_k$  are
said to \emph{modular independent}, if $\Phi_i(\textbf{v}_1),
\cdots , \Phi_i(\textbf{v}_k)$ are linearly independent for some
$i.$ The vectors  $\textbf{v}_1, \cdots , \textbf{v}_k$  are said
to \emph{independent}, if $\sum \alpha_j\textbf{v}_j=\textbf{0}$
implies that $\alpha_j\textbf{v}_j=\textbf{0}$ for all $j.$

Let $C= \eta_0 C_0 \oplus \eta_1 C_1 \oplus\cdots\oplus
\eta_{q-1}C_{q-1}$ be a linear code of length $n$ over $R_q.$ Then
from obtained results in \cite{DL}, it follows that
 $\texttt{rank}_{R_q}(C)=\texttt{max}\{\texttt{dim}_{\mathbb{F}_q}(C_i)\;:\;0\leq
i \leq q-1\},$ $\Phi(C)$ is a linear code of length $qn$ over
$\mathbb{F}_q$ and $\Phi(C)= C_0 \times C_1 \times\cdots\times
C_{q-1}$ with $C_i$ be a linear code of length $n$ over
$\mathbb{F}_q.$ In \cite{DL}, the codewords $\textbf{c}_1, \cdots
, \textbf{c}_k$ form a \emph{basis} of $C,$ if they are
independent, modular independent and generate $C.$ We well-know
that the rows of a generator matrix for $C$ form a basis of $C.$
The following result is a direct consequence from the above
discussion.

\begin{theo}\label{thm:gen}
Let $C= \eta_0 C_0 \oplus \eta_1 C_1
\oplus\cdots\oplus\eta_{q-1}C_{q-1}$ be a linear code over $R_q$
of rank $k,$ and $\mathrm{G}_i$ is a generator matrix of linear
code $C_i$ over $\mathbb{F}_q$ of dimension $k_i,$ for $0\leq
i\leq q-1.$ Then a generator matrix $\mathrm{G}$ for $C$ is given
by:
$$
\mathrm{G}=\eta_0 \widetilde{\mathrm{G}}_0+ \eta_1
\widetilde{\mathrm{G}}_1+ \cdots
+\eta_{q-1}\widetilde{\mathrm{G}}_{q-1}
$$
where $\widetilde{\mathrm{G}}:=\left(%
\begin{array}{c}
  \mathrm{G}_i \\
  \mathrm{O}_{(k-k_i)\times n} \\
\end{array}%
\right)$ with $\mathrm{O}_{(k-k_i)\times n}$ is the $(k-k_i)\times
n$-matrix whose entries are zeros, and a generator matrix for
$\Phi(C)$ is given by:
$$
\Phi(\mathrm{G}):=\left(
\begin{array}{ccccc}
\mathrm{G}_0 & 0 & 0 & \cdots  & 0 \\
0 & \mathrm{G}_1 & 0 & \cdots  & 0 \\
\vdots  & \ddots  & \ddots  & \ddots  & \vdots  \\
0 & \cdots  & 0 & \mathrm{G}_{q-2} & 0 \\
0 & \cdots  & 0 & 0 & \mathrm{G}_{q-1}
\end{array}%
\right).
$$
\end{theo}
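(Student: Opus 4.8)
The plan is to assemble the generator matrix of $C$ directly from the decomposition $C=\eta_0C_0\oplus\eta_1C_1\oplus\cdots\oplus\eta_{q-1}C_{q-1}$, together with the definition of a basis in the sense of \cite{DL} (independent, modular independent, and spanning) that was recalled just before the statement. First I would fix, for each $i$, a generator matrix $\mathrm{G}_i$ of $C_i$ with rows $g^{(i)}_1,\dots,g^{(i)}_{k_i}\in(\mathbb{F}_q)^n$, pad it with $k-k_i$ zero rows to form $\widetilde{\mathrm{G}}_i$, and set $\mathrm{G}=\sum_{i=0}^{q-1}\eta_i\widetilde{\mathrm{G}}_i$. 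The $j$-th row of $\mathrm{G}$ is then $w_j:=\sum_{i=0}^{q-1}\eta_i\,\widetilde g^{(i)}_j$, where $\widetilde g^{(i)}_j$ is $g^{(i)}_j$ if $j\le k_i$ and the zero vector otherwise; by construction $\Phi_i(w_j)=\widetilde g^{(i)}_j$, so applying $\phi_i$ componentwise recovers exactly the padded matrices.

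Next I would verify the three properties. For \emph{spanning}: any $\mathbf a\in C$ can be written $\mathbf a=\sum_i\eta_i\mathbf a_i$ with $\mathbf a_i\in C_i$, and since the rows of $\mathrm{G}_i$ span $C_i$ we can write $\mathbf a_i=\sum_j\lambda^{(i)}_j g^{(i)}_j$ with $\lambda^{(i)}_j\in\mathbb{F}_q$ (taking $\lambda^{(i)}_j=0$ for $j>k_i$); then $\mathbf a=\sum_j\big(\sum_i\lambda^{(i)}_j\eta_i\big)w_j$, an $R_q$-combination of the $w_j$, using orthogonality $\eta_i\eta_{i'}=\delta_{ii'}\eta_i$. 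For \emph{modular independence}: pick an index $i$ with $k_i=k$ (such an $i$ exists because $k=\mathrm{rank}_{R_q}(C)=\max_i\dim_{\mathbb F_q}(C_i)$); then $\Phi_i(w_1),\dots,\Phi_i(w_k)$ are precisely the rows of $\mathrm{G}_i$, which are linearly independent over $\mathbb{F}_q$. For \emph{independence}: suppose $\sum_j\alpha_jw_j=\mathbf 0$ with $\alpha_j\in R_q$; applying $\phi_i$ componentwise for each $i$ gives $\sum_j\phi_i(\alpha_j)\widetilde g^{(i)}_j=\mathbf 0$ in $(\mathbb F_q)^n$, and since the nonzero $\widetilde g^{(i)}_j$ (those with $j\le k_i$) are linearly independent, $\phi_i(\alpha_j)=0$ whenever $j\le k_i$; hence $\alpha_j w_j=\sum_i\eta_i\phi_i(\alpha_j)\widetilde g^{(i)}_j=\mathbf 0$ for every $j$, which is exactly the definition of independence. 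This shows the rows of $\mathrm{G}$ form a basis of $C$, so $\mathrm{G}$ is a generator matrix.

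For the second assertion, I would invoke the fact recalled from \cite{DL} that $\Phi(C)=C_0\times C_1\times\cdots\times C_{q-1}$ as a linear code of length $qn$ over $\mathbb{F}_q$, and simply observe that the rows of the block-diagonal matrix $\mathrm{diag}(\mathrm{G}_0,\dots,\mathrm{G}_{q-1})$ span exactly this direct product: the rows coming from block $i$ span $\{\mathbf 0\}\times\cdots\times C_i\times\cdots\times\{\mathbf 0\}$, and together they span the product. Linear independence of these rows over $\mathbb{F}_q$ is inherited from the linear independence of the rows of each $\mathrm{G}_i$ because of the disjoint block supports. Alternatively, and more in the spirit of the paper, one checks directly that $\Phi(w_j)=\big(\widetilde g^{(0)}_j,\widetilde g^{(1)}_j,\dots,\widetilde g^{(q-1)}_j\big)$, so the matrix whose rows are $\Phi(w_1),\dots,\Phi(w_k)$ is row-equivalent (after discarding zero rows and reordering) to $\Phi(\mathrm{G})$ as displayed.

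The step I expect to require the most care is the \emph{independence} property, since it is the one that genuinely uses the ring structure of $R_q$ rather than just $\mathbb{F}_q$-linear algebra: one must pass through all $q$ coordinate projections $\phi_i$ simultaneously and use both the orthogonality of the idempotents and the padding convention to conclude $\alpha_jw_j=\mathbf 0$ term by term. The spanning and modular-independence parts are essentially bookkeeping with the idempotent decomposition. I would also take a moment to note explicitly that the displayed $\Phi(\mathrm{G})$ has $\sum_i k_i$ nonzero rows in general, which may be fewer than the $qk$ rows one would naively write from $\mathrm{G}$; this is consistent because $\Phi$ is an $\mathbb{F}_q$-vector-space isomorphism and $\dim_{\mathbb F_q}\Phi(C)=\dim_{\mathbb F_q}C=\sum_i k_i$.
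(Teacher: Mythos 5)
Your argument is correct and follows exactly the route the paper intends: the paper gives no proof at all (it declares the theorem ``a direct consequence from the above discussion'' of the basis notion from the Dougherty--Liu framework), and you supply precisely the missing verification that the rows of $\mathrm{G}$ are spanning, modular independent, and independent, plus the product-code description of $\Phi(C)$. One small caveat: your ``alternatively'' remark is false as stated --- the $k\times qn$ matrix with rows $\Phi(w_1),\dots,\Phi(w_k)$ has $\mathbb{F}_q$-row space of dimension at most $k$, whereas $\Phi(\mathrm{G})$ has row space of dimension $\sum_i k_i$, so the two are not row-equivalent in general (already for $q=2$, $k_0=k_1=1$); the correct statement is that the vectors $\Phi(\eta_i w_j)$, which are supported on a single block, span the row space of the displayed block-diagonal matrix. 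Since that aside is inessential and your primary argument via $\Phi(C)=C_0\times\cdots\times C_{q-1}$ is complete, the proof stands.
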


\begin{cor}\label{cor:para} Let $C$ be a linear code of length $n$ over
$R_q$ such that $$C= \eta_0 C_0 \oplus \eta_1 C_1
\oplus\cdots\oplus\eta_{q-1}C_{q-1},$$ where $C_{i}$ is an $[n,
k_i, d_i]$-linear code over $\mathbb{F}_q.$ Then $\Phi(C)$ is a
$\left[qn, \sum\limits_{i=0}^{q-1}k_i, \underset{0\leq i \leq
q-1}{\min }\,d_i\right]$-linear code over $\mathbb{F}_q.$
\end{cor}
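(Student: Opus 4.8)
The plan is to read off the three parameters of $\Phi(C)$ directly from the structural results already in hand. For the length, note that by construction $\Phi$ maps $(R_q)^n$ into $\left((\mathbb{F}_q)^n\right)^{q}$, so every codeword of $\Phi(C)$ is a vector over $\mathbb{F}_q$ of length $qn$. For the dimension, I would invoke Theorem~\ref{thm:gen}: the matrix $\Phi(\mathrm{G})$ displayed there is block-diagonal with diagonal blocks $\mathrm{G}_0,\dots,\mathrm{G}_{q-1}$, hence it generates exactly the Cartesian product $C_0\times C_1\times\cdots\times C_{q-1}$, which is $\Phi(C)$ (as already observed in the discussion preceding that theorem, consistently with $\Phi$ being an $\mathbb{F}_q$-vector space isomorphism). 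Since the $\mathrm{G}_i$ occupy disjoint coordinate blocks and each $\mathrm{G}_i$ has full row rank $k_i$, the rows of $\Phi(\mathrm{G})$ are $\mathbb{F}_q$-linearly independent, so $\dim_{\mathbb{F}_q}\Phi(C)=\sum_{i=0}^{q-1}k_i$.

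For the minimum distance, I would use that $\Phi$ is $\mathbb{F}_q$-linear, so $d(\Phi(C))$ equals the least Hamming weight of a nonzero codeword of $\Phi(C)$, and that by the weight identity $W_H(\Phi(\textbf{a}))=\sum_{i=0}^{q-1}W_H(\Phi_i(\textbf{a}))$ the Hamming weight is additive over the $q$ blocks of $C_0\times\cdots\times C_{q-1}$. A nonzero codeword of $\Phi(C)$ is a tuple $(c_0,\dots,c_{q-1})$ with $c_i\in C_i$ and $c_j\neq\textbf{0}$ for at least one $j$; its weight is then $\sum_i W_H(c_i)\ge W_H(c_j)\ge d_j\ge \min_{0\le i\le q-1}d_i$, which gives the lower bound. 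For the matching upper bound, fix an index $\ell$ with $d_\ell=\min_i d_i$ and a minimum-weight nonzero word $c_\ell\in C_\ell$; the tuple equal to $c_\ell$ in block $\ell$ and zero elsewhere lies in $\Phi(C)$ and has weight $d_\ell$. Hence $d(\Phi(C))=\min_{0\le i\le q-1}d_i$, and the three parameters together establish the claim.

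There is no real obstacle here: the corollary is a bookkeeping consequence of Theorem~\ref{thm:gen} and the additivity of the Hamming weight across the Gray decomposition. The only point meriting a word of care is the convention on the components, namely that each $C_i$ is taken to be a genuine (nonzero) $[n,k_i,d_i]$ code so that $d_i$ is defined; if one wished to allow $C_i=\{0\}$, one sets $d_i=\infty$ and the argument above is unchanged.
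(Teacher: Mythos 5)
Your proposal is correct and follows exactly the route the paper intends: the corollary is left without a separate proof precisely because, once one knows $\Phi(C)=C_0\times C_1\times\cdots\times C_{q-1}$ (from the discussion preceding Theorem~\ref{thm:gen}) and that the Hamming weight is additive across the $q$ blocks, the three parameters are read off as you do. Your explicit treatment of the minimum distance (lower bound from additivity, upper bound from a codeword supported in a single block) and your remark on the degenerate case $C_i=\{0\}$ only make explicit what the paper takes for granted.
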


\begin{exa}
Consider the ring $\mathbb{F}_4[v]/\langle\,v^4-v\,\rangle,$ where
$\mathbb{F}_4=\{0,1,\alpha,\alpha^2\}$ is the finite field with
four elements such that $\alpha^2+\alpha+1=0.$ The complete set of
primitive pairwise orthogonal idempotents of $R_4$ is given by:
$$\eta_0=\overline{v}^3+1; \eta_1= \overline{v}^3+\overline{v}+1; \eta_2=\overline{v}^3+\alpha \overline{v}^2+\alpha^2\overline{v};
\eta_4=\overline{v}^3+\alpha^2\overline{v}^2+\alpha \overline{v}.$$ We let \\
$$G_1:=\left(\begin{array}{llllll}
1 & 0 & 0 & \alpha^2 & \alpha^2 & 1\\
0 & 1 & 0 & \alpha^2 & 0 &\alpha \\
0 & 0 & 1 & 1 & \alpha & \alpha \\
\end{array}
\right);~~ G_2:=\left(\begin{array}{llllll}
1&0&0&\alpha&\alpha&1\\
0&1&0&\alpha&0&\alpha^2\\
0&0&1&1&\alpha^2&\alpha^2\\
\end{array}
\right);$$ and $G_3:=\left(\begin{array}{llllll}
1&0&0&1&0&0\\
0&1&0&0&1&0\\
0&0&1&0&0&1\\
\end{array}
\right). $ Let $C= \eta_0 C_0 \oplus \eta_1 C_1 \oplus\eta_2
C_2\oplus\eta_{3}C_3$ be a linear code of length 6 over $R_4$;
where $C_0, C_1,C_2$ are self dual $[6,3,3]$-codes of length 6
over $\mathbb{F}_4$  with generator matrix $G_1, G_1, G_2$
respectively. The code $C_3$ generated by $G_3$ is a self-dual
$[6,3,2]$-code over $\mathbb{F}_4$ \cite{JLX}. From
Proposition\,\ref{pro:Cself}, $C$ is a self-dual code of length
$6$ over $R_4.$ From Proposition\,\ref{pro:self},
Theorem\,\ref{thm:gen} and Corollary\,\ref{cor:para}, the code
$\Phi(C)$ is a self-dual $[24,12,2]$-code with generator matrix
$\Phi(G)=\left(
\begin{array}{l l l l}
G_1&0& 0& 0\\
0 & G_1&0&0\\
0 & 0&G_2&0\\
0&0&0&G_3
\end{array}
\right).$
\end{exa}

\section{Skew-constacyclic codes over $R_q$}\label{sec:4}

Let $\Theta$ be a ring-automorphism of $R_q$ such that
$\Theta=\Theta_{\theta,\sigma}$ where $\theta\in
\texttt{Aut}(\mathbb{F}_q)$, $\sigma\in\mathbb{S}_q.$ The
$\theta$-skew polynomial ring over $\mathbb{F}_q,$ denoted
$\mathbb{F}_q[x; \theta],$ is right Euclidean domain \cite{Jac96}.
The \emph{least common left multiple} (\texttt{lclm}) of nonzero
$f_1$ and $f_2,$ in $ \mathbb{F}_q[x; \theta],$ denoted
$\texttt{lclm}(f_1, f_2),$ is the unique monic polynomial $h\in
\mathbb{F}_q[x; \theta]$ of lowest degree such that there exist
$u_1, u_2 \in \mathbb{F}_q[x; \theta]$ with $h = u_1f_1$ and $h =
u_2f_2.$ The skew polynomial ring $R_q[x, \Theta]$ is the set
$R_q[x]$ of formal polynomials over $R_q$, where the addition is
defined as the usual addition of polynomials and the
multiplication is defined  using the rule  $x a = \Theta(a)x$,
which is extended to all elements of $R_q[x, \Theta]$ by
associativity and distributivity. An element in $R_q[x, \Theta]$
is called a \emph{skew polynomial}.

Let $\lambda$ be a unit of $R_q.$ For a given automorphism
$\Theta$ of $R_q$, we define the $\Theta$-$\lambda$- constacyclic
shift $\texttt{T}_{\Theta,\lambda}$ on $R^n_q$ by:
$$\texttt{T}_{\Theta,\lambda}((a_0,a_1,\cdots,a_{n-1}))=(\lambda \Theta(a_{n-1}), \Theta(a_0),\cdots, \Theta(a_{n-2})).$$
A linear code $C$ of length $n$ over $R_q$ is said to be
\emph{skew-$\Theta$-$\lambda$-constacyclic}, if
$\texttt{T}_{\Theta,\lambda}(C)=C.$ Similarly to classical
constacyclic codes over finite rings,
skew-$\Theta$-$\lambda$-constacyclic codes of length $n$ over
$R_q$ are identified with the left $R_q[x,\Theta]$-submodule of
$R_q[x, \Theta]/\langle\,x^n-\lambda\,\rangle$ by the
identification:
$$ \varphi: (a_0,a_1,\cdots,a_{n-1})\longmapsto a_0+a_1x+\cdots+a_{n-1}x^{n-1}.$$
Indeed, it is straightforward to see that the set $R_q[x,
\Theta]/\langle\,x^n-\lambda\,\rangle$ is a left
$R_q[x,\Theta]$-module under the multiplication defined by
$$f(x)\left(g(x)+ \langle\,x^n- \lambda\,\rangle\right)= f(x)g(x)+ \langle\,x^n- \lambda\,\rangle,$$
with $f(x), g(x) \in R_q[x, \Theta]/\langle\,x^n-\lambda\,\rangle$
and by analogous methods that have been used for skew constacyclic
codes over finite fields \cite{BU2}, we have the following fact.

\begin{lem}
Let $\lambda$ be a unit of $R_q$, $\Theta$ an automorphism of
$R_q$ and $C$ be a linear code of length $n$ over $R_q.$ Then $C$
is a skew-$\Theta$-$\lambda$-constacyclic code if and only if
$\varphi(C)$ is a left $R[x, \Theta]$-submodule of $R_q[x,
\Theta]/\langle\,x^n-\lambda\,\rangle.$
\end{lem}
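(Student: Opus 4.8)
The plan is to prove the equivalence by unwinding the definition of the identification map $\varphi$ and translating the condition $\texttt{T}_{\Theta,\lambda}(C)=C$ into a statement about closure under multiplication by $x$ in the quotient ring $R_q[x,\Theta]/\langle\,x^n-\lambda\,\rangle$. First I would observe that $\varphi$ is an $R_q$-module isomorphism from $(R_q)^n$ onto $R_q[x,\Theta]/\langle\,x^n-\lambda\,\rangle$, since both are free $R_q$-modules of rank $n$ and $\varphi$ sends the standard basis to $\{1,x,\dots,x^{n-1}\}$; hence $C$ is a linear (sub-$R_q$-module) code if and only if $\varphi(C)$ is an $R_q$-submodule. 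So the content of the lemma is entirely about the extra multiplicative structure coming from $x$.

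The key computation is the following: for a codeword $\mathbf{a}=(a_0,a_1,\dots,a_{n-1})$, one has
\begin{align*}
x\cdot\varphi(\mathbf{a}) &= x(a_0+a_1x+\cdots+a_{n-1}x^{n-1})\\
&= \Theta(a_0)x+\Theta(a_1)x^2+\cdots+\Theta(a_{n-1})x^{n}\\
&\equiv \lambda\,\Theta(a_{n-1})+\Theta(a_0)x+\cdots+\Theta(a_{n-2})x^{n-1} \pmod{x^n-\lambda},
\end{align*}
where the reduction uses $x^n\equiv\lambda$ in the quotient together with the fact that $\lambda$ is central-enough here (as a unit of $R_q$, $\lambda x^n = x^n\lambda$ since $\Theta^n$ applied appropriately, or more simply one tracks the constant $\lambda$ through the relation; in any case $x^n - \lambda$ generates a two-sided ideal precisely when $\Theta^n(\lambda)=\lambda$, which is the standing hypothesis implicit in forming the quotient module). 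The right-hand side is exactly $\varphi\big(\texttt{T}_{\Theta,\lambda}(\mathbf{a})\big)$. Thus $\varphi\circ\texttt{T}_{\Theta,\lambda} = (x\,\cdot)\circ\varphi$ as maps on $(R_q)^n$.

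Granting this intertwining identity, the equivalence is immediate. If $C$ is skew-$\Theta$-$\lambda$-constacyclic, then $\varphi(C)$ is an $R_q$-submodule that is stable under multiplication by $x$; since $R_q[x,\Theta]$ is generated as a ring by $R_q$ and $x$, stability under both forces $\varphi(C)$ to be a left $R_q[x,\Theta]$-submodule. Conversely, if $\varphi(C)$ is a left $R_q[x,\Theta]$-submodule, then in particular $x\cdot\varphi(C)\subseteq\varphi(C)$, and applying $\varphi^{-1}$ gives $\texttt{T}_{\Theta,\lambda}(C)\subseteq C$; equality then follows because $\texttt{T}_{\Theta,\lambda}$ is a bijection on the finite set $C$ (it is injective, being a composite of the bijective coordinate shift with the bijections $\Theta$ on each coordinate and multiplication by the unit $\lambda$).

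The only genuinely delicate point is the reduction step $x^n\equiv\lambda$ used in the displayed computation: one must check that $x^n-\lambda$ lies in the center, equivalently that $\Theta^n(\lambda)=\lambda$, so that $\langle x^n-\lambda\rangle$ is a two-sided ideal and the quotient carries a well-defined left module structure over $R_q[x,\Theta]$; this is exactly the compatibility condition under which skew-constacyclic codes are defined, so I would either cite it as part of the standing setup (as is done for the finite-field case in \cite{BU2}) or note it explicitly. Everything else is bookkeeping, and I expect the translation identity $\varphi\circ\texttt{T}_{\Theta,\lambda}=(x\,\cdot)\circ\varphi$ to be the heart of the argument.
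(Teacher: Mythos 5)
Your proof is correct and is essentially the argument the paper intends (the paper omits the proof, deferring to the analogous finite-field case in \cite{BU2}): the intertwining identity $\varphi\circ\texttt{T}_{\Theta,\lambda}=(x\,\cdot)\circ\varphi$ together with bijectivity of the shift is the whole content. One small remark: the reduction $\Theta(a_{n-1})x^n\equiv\lambda\,\Theta(a_{n-1})$ needs only that $\langle\,x^n-\lambda\,\rangle$ is a \emph{left} ideal and that $R_q$ is commutative, so the condition $\Theta^n(\lambda)=\lambda$ you flag is not actually required for the left-module statement.
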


In order to characterize skew-$\Theta$-$\lambda$-constacyclic
codes over $R_q$, we recall some well-known results about
skew-constacyclic codes over finite fields \cite{BU2, BU3, GMF,
SAS}. The skew reciprocal polynomial of a polynomial
$g=\sum\limits_{j=0}^{k}a_j x^j \in \mathbb{F}_q[x,\theta]$ of
degree $k$ denoted by $g^{*}$ is defined as
$$g^{*}(x)= \sum\limits_{j=0}^{k} x^{k-j} a_j= \sum\limits_{j=0}^{k}\theta^{j}(a_{k-j})x^{j}.$$
If $a_0 \neq 0$, the left monic skew reciprocal polynomial of $g$
is $g^{\natural}:=\frac{1}{\theta^{k}(a_0)}g^{*}$ \cite[Definition
3]{BU3}.

\begin{lem}
Let $C$ be a skew-$\theta$-$\lambda$-constacyclic code of length
$n$ over $\mathbb{F}_q.$ Then there exists a monic polynomial $g$
of minimal degree in $C$ such that $g(x)$ is a right divisor of
$x^n- \lambda$ and $C=\langle\,g(x)\,\rangle.$
\end{lem}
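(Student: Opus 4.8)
The plan is to work inside the right Euclidean domain $\mathbb{F}_q[x,\theta]$ and exploit the identification of a skew-$\theta$-$\lambda$-constacyclic code $C$ of length $n$ with a left submodule $\varphi(C)$ of $\mathbb{F}_q[x,\theta]/\langle\,x^n-\lambda\,\rangle.$ First I would dispose of the trivial case $C=\{\textbf{0}\}$ (where the empty/degenerate statement is vacuous or one takes $g=x^n-\lambda$), and then among all nonzero codewords of $C$ pick one whose image under $\varphi$ has minimal degree; by scaling by the inverse of its leading coefficient (a unit in $\mathbb{F}_q$) we may take this polynomial $g(x)$ to be monic. The key structural point is that $g$ generates $\varphi(C)$ as a left ideal of the quotient: given any $c(x)\in\varphi(C),$ right division in $\mathbb{F}_q[x,\theta]$ yields $c(x)=u(x)g(x)+r(x)$ with $r=0$ or $\deg r<\deg g$; since $\varphi(C)$ is closed under left multiplication by $\mathbb{F}_q[x,\theta]$ and under addition, $r(x)=c(x)-u(x)g(x)$ lies in $\varphi(C),$ so minimality of $\deg g$ forces $r=0,$ whence $c(x)\in\langle\,g(x)\,\rangle$ and $C=\langle\,g(x)\,\rangle.$

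Next I would show $g$ right-divides $x^n-\lambda.$ Apply right division again: $x^n-\lambda = q(x)g(x)+s(x)$ with $s=0$ or $\deg s<\deg g.$ Working modulo $x^n-\lambda,$ this reads $s(x)\equiv -q(x)g(x),$ and since $g(x)\in\varphi(C)$ and $\varphi(C)$ is a left submodule, the right-hand side lies in $\varphi(C)$; hence $s(x)$, viewed as its own representative of degree $<\deg g\le n,$ is an element of $\varphi(C)$ of degree smaller than $\deg g.$ Again minimality forces $s=0,$ so $x^n-\lambda=q(x)g(x)$ in $\mathbb{F}_q[x,\theta],$ i.e. $g$ is a right divisor of $x^n-\lambda.$ A small care point here is to make sure that when one passes between the quotient ring and degree-$<n$ representatives the bookkeeping is clean — this is routine because $\deg g\le n$ and every coset of $\langle\,x^n-\lambda\,\rangle$ has a unique representative of degree $<n$.

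The only genuine subtlety — and what I would flag as the main obstacle — is the non-commutativity: right division must be used on the correct side throughout, because $\mathbb{F}_q[x,\theta]$ is only a \emph{right} Euclidean domain, and the left-submodule structure of $\varphi(C)$ is precisely what makes ``$c(x)-u(x)g(x)\in\varphi(C)$'' legitimate while the reversed product $g(x)u(x)$ would not help. One should also check that $g$ being a right divisor of $x^n-\lambda$ is consistent with $g$ being a codeword (it is: $g(x)$ is its own representative of degree $<n$, unless $\deg g=n$ in which case $C$ is the whole quotient and $g=x^n-\lambda$). Since the right Euclidean algorithm and this quotient bookkeeping are the standard tools, and the argument parallels the field-case treatment cited as \cite{BU2}, I expect no essential difficulty beyond being careful about sides; the proof is essentially the two right-division arguments above together with the observation that minimality of degree propagates through both.
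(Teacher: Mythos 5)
Your argument is correct: the paper itself states this lemma without proof, recalling it from the cited literature on skew-constacyclic codes over finite fields, and the proof given there is precisely your two right-division arguments (minimal-degree monic generator via right division in the right Euclidean domain $\mathbb{F}_q[x,\theta],$ then right division of $x^n-\lambda$ by $g$ with minimality forcing the remainder to vanish). Your care about dividing on the correct side and about coset representatives of degree $<n$ is exactly the point that needs attention, so nothing is missing.
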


Let $g(x)= x^{m}+ a_{m-1}x^{m-1}+\cdots+a_0$ be a generator of a
skew-$\theta$-$\lambda$-constacyclic code of length $n$ over
$\mathbb{F}_q$ . Since $x^n- \lambda= h(x)g(x)$ for some $h \in
\mathbb{F}_q[x,\theta]$; then $a_0$ must be a non-zero element of
$\mathbb{F}_q.$ From \cite[Theorem 1]{BU2}, we have the following
result.

\begin{lem}\label{dual}
Let $C$ be a skew-$\theta$-$\lambda$-constacyclic code of length
$n$ over $\mathbb{F}_q$ generated by a monic polynomial $g$ of
degree $n-k$ with $g(x)= x^{n-k}+ \sum\limits_{j=0}^{n-k-1} a_i
x^{i}.$ Let
$\lambda^{*}=\frac{\theta^{n}(a_0)}{a_0\theta^{n-k}(\lambda)}.$
Then $C^{\perp }$ is a skew-$\theta$-$\lambda^{*}$-constacyclic
code of length $n$ over $\mathbb{F}_q$ such that $C^{\perp
}=\langle\,h^{*}(x)\,\rangle$ where $h$ is a monic polynomial of
degree $k$ such that $x^n- \theta^{-k}(\lambda)=g(x)h(x).$
Moreover $h^{*}(x)$ is a right divisor of $x^n- \lambda^{*}.$
\end{lem}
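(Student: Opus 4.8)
The plan is to work inside the skew polynomial ring $\mathbb{F}_q[x;\theta]$ and to identify $C$ with the left ideal $\langle g\rangle$ of $\mathbb{F}_q[x;\theta]/\langle x^n-\lambda\rangle$. Since $g$ is a monic right divisor of $x^n-\lambda$ of degree $n-k$, the cosets of $g, xg,\ldots,x^{k-1}g$ form an $\mathbb{F}_q$-basis of $C$, so that $\dim_{\mathbb{F}_q}C=k$ and hence $\dim_{\mathbb{F}_q}C^{\perp}=n-k$. Because $h(0)=-\theta^{-k}(\lambda)/a_0\neq 0$, the skew reciprocal $h^{*}$ has degree $k$ and constant term $1$; so once we know $\langle h^{*}\rangle$ is constacyclic it will also have dimension $n-k$. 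Thus it suffices to prove three things: that $h^{*}$ is a right divisor of $x^n-\lambda^{*}$, so that $\langle h^{*}\rangle$ is a genuine skew-$\theta$-$\lambda^{*}$-constacyclic code of dimension $n-k$; that $\langle h^{*}\rangle\subseteq C^{\perp}$; and then the equality $C^{\perp}=\langle h^{*}\rangle$ follows by comparing dimensions.

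The engine of the argument is the skew reciprocal polynomial of \cite{BU2,BU3}. Writing $\theta(f)$ for the polynomial obtained by applying $\theta$ to every coefficient of $f$ — a ring automorphism of $\mathbb{F}_q[x;\theta]$ fixing $x$ — I would use the multiplicativity rule $(f_1f_2)^{*}=\theta^{\deg f_1}(f_2^{*})\,f_1^{*}$, the near-involution $f^{**}=\theta^{\deg f}(f)$, and the elementary identity $(x^n-c)^{*}=1-\theta^{n}(c)\,x^{n}$, each obtained by a direct coefficient comparison. The existence of the monic $h$ with $g(x)h(x)=x^n-\theta^{-k}(\lambda)$ is an instance of the left/right divisor correspondence in $\mathbb{F}_q[x;\theta]$: a monic right divisor of $x^n-\lambda$ whose cofactor has degree $k$ is a monic left divisor of $x^n-\theta^{-k}(\lambda)$ \cite{BU2}, and dually $h$ is itself a monic left divisor of a trinomial $x^n-c$.

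For the constacyclic structure and the divisibility, I would apply $(\cdot)^{*}$ to a factorisation in which $h$ is the left factor. Taking reciprocals and using $(f_1f_2)^{*}=\theta^{\deg f_1}(f_2^{*})f_1^{*}$ together with $(x^n-c)^{*}=1-\theta^{n}(c)x^{n}$ places $h^{*}$ as a right factor of a polynomial $1-c'x^{n}$, and rewriting $1-c'x^{n}=-c'\bigl(x^{n}-c'^{-1}\bigr)$ exhibits $h^{*}$ as a right divisor of $x^n-\lambda^{*}$. Matching the constant and leading coefficients through these reciprocal identities — this is where the constant term $a_0$ of $g$ and the automorphism powers $\theta^{n},\theta^{n-k}$ enter — pins the modulus down to the value $\lambda^{*}=\theta^{n}(a_0)/\bigl(a_0\,\theta^{n-k}(\lambda)\bigr)$ recorded in the statement. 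Since $h^{*}$ then right-divides $x^n-\lambda^{*}$, the left ideal $\langle h^{*}\rangle$ is a bona fide skew-$\theta$-$\lambda^{*}$-constacyclic code.

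The remaining and most delicate step is the orthogonality $\langle h^{*}\rangle\subseteq C^{\perp}$. Taking the $\mathbb{F}_q$-generators $x^{i}g$ $(0\le i\le k-1)$ of $C$ and $x^{j}h^{*}$ $(0\le j\le n-k-1)$ of $\langle h^{*}\rangle$, I would reduce each Euclidean inner product $\langle x^{i}g,\,x^{j}h^{*}\rangle$, after unfolding the definition of $h^{*}$ and absorbing a power of $\theta$, to a single coefficient of the product $g(x)h(x)$ in one of the degrees $1,\ldots,n-1$. All of these coefficients vanish because $g(x)h(x)=x^n-\theta^{-k}(\lambda)$ has no middle terms, which gives $\langle h^{*}\rangle\subseteq C^{\perp}$; combined with $\dim_{\mathbb{F}_q}C^{\perp}=n-k=\dim_{\mathbb{F}_q}\langle h^{*}\rangle$ this forces $C^{\perp}=\langle h^{*}\rangle$. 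I expect the bookkeeping of the $\theta$-twists — both in pinning down $\lambda^{*}$ and in matching the inner products to the coefficients of $g(x)h(x)$ — to be the main obstacle, since the skew multiplication makes the reciprocal only a twisted involution and forbids moving scalars freely across factors.
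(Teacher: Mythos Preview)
The paper does not supply a proof of this lemma at all: it is stated as a recall of a known result, introduced by ``From \cite[Theorem~1]{BU2}, we have the following result,'' and no argument is given. Your sketch is therefore not competing with anything in the paper itself; rather, it reconstructs the standard Boucher--Ulmer proof from \cite{BU2,BU3}, and does so correctly. The three ingredients you identify --- the twisted multiplicativity $(f_1f_2)^{*}=\theta^{\deg f_1}(f_2^{*})\,f_1^{*}$, the left/right divisor swap producing $g(x)h(x)=x^n-\theta^{-k}(\lambda)$, and the orthogonality-by-middle-coefficients argument combined with a dimension count --- are exactly the machinery used in those references, and your bookkeeping of where $a_0$, $\theta^{n}$, and $\theta^{n-k}$ enter the value of $\lambda^{*}$ is accurate.
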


We generalize the notion of multi-twisted codes which be defined
in \cite{AH}, by introducing skew-multi-twisted codes.

\begin{definition}
Let $C$ be a linear code of length $nq$ over $\mathbb{F}_q$ and
$(\theta,\sigma)\in\texttt{Aut}(\mathbb{F}_q)\times\mathbb{S}_q.$
Let $\lambda_0, \lambda_1,\cdots,\lambda_{q-1}$ be units in
$\mathbb{F}_q.$ The code $C$ is  \emph{skew $(\lambda_0,
\lambda_1,\cdots,\lambda_{q-1})$-multi-twisted} w.r.t
$(\theta,\sigma)$ if for any codeword
$\textbf{c}:=\left(\textbf{c}_0,
\textbf{c}_1,\cdots,\textbf{c}_{q-1}\right) \in C$ where
$\textbf{c}_i:=\Phi_i(\textbf{c})$ for $0\leq i\leq q-1,$ the word
$$
\biggl(\texttt{T}_{\theta,\lambda_0}\left(\textbf{c}_{\sigma^{-1}(0)}\right),
\texttt{T}_{\theta,\lambda_1}\left(\textbf{c}_{\sigma^{-1}(1)}\right),\cdots,\texttt{T}_{\theta,\lambda_{q-1}}\left(\textbf{c}_{\sigma^{-1}(q-1)}\right)\biggr)
$$ is in $C.$
\end{definition}

\begin{theo}
Let $\lambda= \lambda_0 \eta_0+ \lambda_1 \eta_1+\cdots+
\lambda_{q-1} \eta_{q-1}$ be a unit of $R_q$ and $C$ be a linear
code of length $n$ over $R_q.$ Then $C$ is skew
$\Theta_{\theta,\sigma}$-$\lambda$-constacyclic if and only if the
linear code $\Phi(C)$ of length $qn$ over $\mathbb{F}_q$ is skew
$(\lambda_0, \lambda_1,\cdots,\lambda_{q-1})$-multi-twisted w.r.t
$(\theta,\sigma).$
\end{theo}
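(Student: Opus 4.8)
The plan is to unwind both conditions through the ring-isomorphism $\phi$ (equivalently the Gray map $\Phi$) and compare them coordinate by coordinate. First I would fix a codeword $\textbf{a}=(a_0,a_1,\cdots,a_{n-1})\in(R_q)^n$ and, using $a_j=\sum_{i=0}^{q-1}\eta_i\phi_i(a_j)$, record that $\Phi_i(\textbf{a})=\textbf{a}_i=(\phi_i(a_0),\cdots,\phi_i(a_{n-1}))$. The key computational input is the interaction of $\Theta_{\theta,\sigma}$ with the idempotents: since $\Theta_{\theta,\sigma}(\eta_i)=\eta_{\sigma(i)}$ and $\Theta_{\theta,\sigma}$ acts as $\theta$ on coordinates, one gets $\phi_i(\Theta_{\theta,\sigma}(a))=\theta\bigl(\phi_{\sigma^{-1}(i)}(a)\bigr)$ for every $a\in R_q$. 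Likewise, writing $\lambda=\sum_{i=0}^{q-1}\lambda_i\eta_i$ with each $\lambda_i\in\mathbb{F}_q^\times$ (which is exactly the description of $\mathcal{U}(R_q)$ from Section~\ref{sec:2}), the product formula gives $\phi_i(\lambda\,\Theta_{\theta,\sigma}(a))=\lambda_i\,\theta\bigl(\phi_{\sigma^{-1}(i)}(a)\bigr)$.

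Next I would apply these identities to the constacyclic shift. By definition,
$$\texttt{T}_{\Theta_{\theta,\sigma},\lambda}(\textbf{a})=\bigl(\lambda\,\Theta_{\theta,\sigma}(a_{n-1}),\Theta_{\theta,\sigma}(a_0),\cdots,\Theta_{\theta,\sigma}(a_{n-2})\bigr),$$
so applying $\Phi_i$ and using the two displayed identities shows that the $i$-th Gray component of $\texttt{T}_{\Theta_{\theta,\sigma},\lambda}(\textbf{a})$ equals
$$\bigl(\lambda_i\,\theta(\phi_{\sigma^{-1}(i)}(a_{n-1})),\,\theta(\phi_{\sigma^{-1}(i)}(a_0)),\cdots,\theta(\phi_{\sigma^{-1}(i)}(a_{n-2}))\bigr)=\texttt{T}_{\theta,\lambda_i}\bigl(\textbf{a}_{\sigma^{-1}(i)}\bigr).$$
Hence $\Phi\bigl(\texttt{T}_{\Theta_{\theta,\sigma},\lambda}(\textbf{a})\bigr)=\bigl(\texttt{T}_{\theta,\lambda_0}(\textbf{a}_{\sigma^{-1}(0)}),\cdots,\texttt{T}_{\theta,\lambda_{q-1}}(\textbf{a}_{\sigma^{-1}(q-1)})\bigr)$, i.e. $\Phi$ intertwines the skew-constacyclic shift on $(R_q)^n$ with the skew-multi-twisted shift on $((\mathbb{F}_q)^n)^q$.

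With this intertwining relation in hand the equivalence is immediate: since $\Phi$ is an $\mathbb{F}_q$-linear isomorphism carrying $C$ bijectively onto $\Phi(C)$, we have $\texttt{T}_{\Theta_{\theta,\sigma},\lambda}(C)=C$ if and only if $\Phi\bigl(\texttt{T}_{\Theta_{\theta,\sigma},\lambda}(C)\bigr)=\Phi(C)$, and by the identity above the left-hand side is precisely the set of skew-multi-twisted shifts (w.r.t.\ $(\theta,\sigma)$ and $(\lambda_0,\cdots,\lambda_{q-1})$) of the codewords of $\Phi(C)$. This says exactly that $\Phi(C)$ is skew $(\lambda_0,\cdots,\lambda_{q-1})$-multi-twisted w.r.t.\ $(\theta,\sigma)$. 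Linearity of $\Phi(C)$ is already guaranteed by Proposition~\ref{pro:self} (or the remarks preceding Theorem~\ref{thm:gen}).

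The only genuinely delicate point is bookkeeping with the permutation $\sigma$: one must be careful that $\phi_i\circ\Theta_{\theta,\sigma}=\theta\circ\phi_{\sigma^{-1}(i)}$ rather than $\phi_{\sigma(i)}$, which is why the multi-twisted definition is phrased with $\sigma^{-1}$; getting this index right is the crux, and everything else is a routine coordinatewise verification. I would prove the index identity cleanly by testing it on the basis $\{\eta_0,\cdots,\eta_{q-1}\}$: $\phi_i(\Theta_{\theta,\sigma}(\eta_j))=\phi_i(\eta_{\sigma(j)})=\delta_{i,\sigma(j)}=\delta_{\sigma^{-1}(i),j}=\theta(\phi_{\sigma^{-1}(i)}(\eta_j))$, and then extend by $\mathbb{F}_q$-semilinearity.
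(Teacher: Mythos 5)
Your proposal is correct and follows essentially the same route as the paper: both compute $\Phi_i\bigl(\texttt{T}_{\Theta_{\theta,\sigma},\lambda}(\textbf{a})\bigr)=\texttt{T}_{\theta,\lambda_i}\bigl(\textbf{a}_{\sigma^{-1}(i)}\bigr)$ by expanding in the idempotent basis and reindexing via $\sigma^{-1}$, and then deduce the equivalence from the bijectivity of $\Phi$. Your only (welcome) refinement is to isolate the identity $\phi_i\circ\Theta_{\theta,\sigma}=\theta\circ\phi_{\sigma^{-1}(i)}$ and verify it on the idempotents, which the paper instead carries out inline.
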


\begin{Proof} Set $\Theta:=\Theta_{\theta,\sigma}.$ Let $\textbf{c}= (c_0, c_1, \cdots, c_{n-1}) \in(R_q)^n$
where $\textbf{c}_{j}:=c_{0,j} \eta_0+ c_{1,j} \eta_1+ \cdots +
c_{q-1,j} \eta_{q-1};$ for all $0 \leq j \leq n-1.$ Then for any
$0\leq t \leq q-1,$

\begin{eqnarray*}
  \Phi_t(\texttt{T}_{ \Theta,\lambda}(\textbf{c})) &=& \Phi_t(\lambda\Theta(c_{n-1}), \Theta(c_0),\cdots, \Theta(c_{n-2})); \\
    &=& \left(\phi_t(\lambda\Theta(c_{n-1})), \phi_t(\Theta(c_0)),\cdots, \phi_t(\Theta(c_{n-2}))\right); \\
        &=& \left(\phi_t\left(\left(\sum\limits_{i=0}^{q-1}\lambda_i\eta_i\right)\left(\sum\limits_{i=0}^{q-1}\theta(c_{i, n-1})\eta_{\sigma(i)}\right)\right),\phi_t\left(\sum\limits_{i=0}^{q-1}\theta(c_{i, 0})\eta_{\sigma(i)}\right),\cdots, \phi_t\left(\sum\limits_{i=0}^{q-1}\theta(c_{i, n-2})\eta_{\sigma(i)}\right)\right);\\
        &=& \left(\phi_t\left(\sum\limits_{i=0}^{q-1}\lambda_i\theta(c_{\sigma^{-1}(i), n-1})\eta_{i}\right),\phi_t\left(\sum\limits_{i=0}^{q-1}\theta(c_{\sigma^{-1}(i), 0})\eta_{i}\right),\cdots, \phi_t\left(\sum\limits_{i=0}^{q-1}\theta(c_{\sigma^{-1}(i), n-2})\eta_{i}\right)\right);\\
        &=& \left(\lambda_t\theta(c_{\sigma^{-1}(t), n-1}),  \theta(c_{\sigma^{-1}(t), 0}),\cdots,  \theta(c_{\sigma^{-1}(t), n-2})\right);\\
        &=& \texttt{T}_{\theta,\lambda_t}(\textbf{c}_{\sigma^{-1}(t)}).
\end{eqnarray*}
Thus, it follows that
\begin{eqnarray*}
  \Phi(\texttt{T}_{\Theta,\lambda}(\textbf{c})) &=&\left(\Phi_0(\texttt{T}_{\Theta,\lambda}(\textbf{c})), \Phi_1(\texttt{T}_{\Theta,\lambda}(\textbf{c})),\cdots, \Phi_{q-1}(\texttt{T}_{\Theta,\lambda}(\textbf{c}))\right) ; \\
    &=&\left(\texttt{T}_{\theta,\lambda_0}(\textbf{c}_{\sigma^{-1}(0)}), \texttt{T}_{\theta,\lambda_1}(\textbf{c}_{\sigma^{-1}(1)}),\cdots, \texttt{T}_{\theta,\lambda_{q-1}}(\textbf{c}_{\sigma^{-1}(q-1)})\right);\\
    &=& \left(\texttt{T}_{\theta,\lambda_0}(\Phi_{\sigma^{-1}(0)}(\textbf{c})), \texttt{T}_{\theta,\lambda_1}(\Phi_{\sigma^{-1}(1)}(\textbf{c})),\cdots,  \texttt{T}_{\theta,\lambda_{q-1}}(\Phi_{\sigma^{-1}(q-1)}(\textbf{c}))\right).
\end{eqnarray*}
Hence, $C$ is a skew $\Theta$-$\lambda$-constacyclic code of
length $n$ over $R_q$ if and only if
$\texttt{T}_{\Theta,\lambda}(C) = C,$ which happens if and only if
$\Phi(\texttt{T}_{\Theta,\lambda}(C))= \Phi(C),$ which is
equivalent to saying $\Phi(C)$ is skew $(\lambda_0,
\lambda_1,\cdots,\lambda_{q-1})$-multi-twisted code w.r.t
$(\theta,\sigma).$
\end{Proof}

The following results are direct consequences of the above
theorem:

\begin{cor} \label{ref1}
Let $C= \eta_0 C_0 \oplus \eta_1 C_1 \oplus \cdots \oplus
\eta_{q-1}C_{q-1}$ be a linear code of length $n$ over $R_q$ and
$\lambda= \lambda_0\eta_0+\lambda_1\eta_1+
\cdots+\lambda_{q-1}\eta_{q-1}$ be a unit of $R_q.$ Then $C$ is a
skew $\Theta_{\theta,\sigma}$-$\lambda$-constacyclic code of
length $n$ over $R_q$ if and only if $C_i$ is a skew
$\theta$-$\lambda_{\sigma(i)}$-constacyclic code of length $n$
over $\mathbb{F}_q,$ for all $0 \leq i \leq q-1.$
\end{cor}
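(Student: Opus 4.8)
\begin{Proof} The plan is to deduce the corollary from the preceding theorem together with the identification $\Phi(C)=C_0\times C_1\times\cdots\times C_{q-1}$ established above, reducing everything to unpacking the skew-multi-twisted condition one coordinate block at a time.

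I would first invoke the theorem: $C$ is skew $\Theta_{\theta,\sigma}$-$\lambda$-constacyclic of length $n$ over $R_q$ if and only if $\Phi(C)$ is skew $(\lambda_0,\lambda_1,\cdots,\lambda_{q-1})$-multi-twisted of length $qn$ over $\mathbb{F}_q$ with respect to $(\theta,\sigma)$. Using $\Phi(C)=C_0\times C_1\times\cdots\times C_{q-1}$, I would then observe that a word belongs to $\Phi(C)$ exactly when each of its $q$ blocks lies in the corresponding constituent, while the skew-multi-twisted image of $(\mathbf{c}_0,\cdots,\mathbf{c}_{q-1})$ has $t$-th block $\texttt{T}_{\theta,\lambda_t}(\mathbf{c}_{\sigma^{-1}(t)})$. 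Hence $\Phi(C)$ is skew $(\lambda_0,\cdots,\lambda_{q-1})$-multi-twisted with respect to $(\theta,\sigma)$ if and only if $\texttt{T}_{\theta,\lambda_t}(C_{\sigma^{-1}(t)})\subseteq C_t$ for every $t$; after the substitution $t=\sigma(i)$ this is the family of inclusions $\texttt{T}_{\theta,\lambda_{\sigma(i)}}(C_i)\subseteq C_{\sigma(i)}$ for $i=0,1,\cdots,q-1$.

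Next I would promote each inclusion to an equality. Every $\texttt{T}_{\theta,\mu}$ with $\mu$ a unit of $\mathbb{F}_q$ is a bijection of $(\mathbb{F}_q)^n$ (because $\theta\in\texttt{Aut}(\mathbb{F}_q)$), so $\texttt{T}_{\theta,\lambda_{\sigma(i)}}(C_i)\subseteq C_{\sigma(i)}$ forces $|C_i|\leq|C_{\sigma(i)}|$ for all $i$; running this inequality around each cycle of $\sigma$ yields $|C_i|=|C_{\sigma(i)}|$, and therefore $\texttt{T}_{\theta,\lambda_{\sigma(i)}}(C_i)=C_{\sigma(i)}$. Accounting for the fact that $\sigma$ simply relabels the list $C_0,\cdots,C_{q-1}$, this is exactly the asserted skew-$\theta$-$\lambda_{\sigma(i)}$-constacyclicity of the constituents $C_i$. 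For the converse I would start from $C=\eta_0C_0\oplus\cdots\oplus\eta_{q-1}C_{q-1}$ with each $C_i$ of that type and run the same block computation backwards to recover $\texttt{T}_{\Theta_{\theta,\sigma},\lambda}(C)=C$.

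I anticipate that the only real friction lies in the index bookkeeping — keeping $\sigma$ and $\sigma^{-1}$ straight through the substitution $t=\sigma(i)$ and through the presentation of $\Phi(C)$ as a product of the $C_i$ — together with the small but easily overlooked point that the block inclusions are automatically equalities, since the shifts are bijective and the codes finite. Everything else is formal, which is why the corollary is legitimately ``a direct consequence'' of the theorem.
\end{Proof}
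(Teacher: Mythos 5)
Your computation is the right one and, up to its last step, it is correct: combining the theorem with $\Phi(C)=C_0\times\cdots\times C_{q-1}$ and the bijectivity of the shifts shows that $C$ is skew $\Theta_{\theta,\sigma}$-$\lambda$-constacyclic if and only if $\texttt{T}_{\theta,\lambda_{\sigma(i)}}(C_i)=C_{\sigma(i)}$ for all $i$. The gap is the final sentence, where you pass from this to ``$C_i$ is skew $\theta$-$\lambda_{\sigma(i)}$-constacyclic,'' i.e.\ to $\texttt{T}_{\theta,\lambda_{\sigma(i)}}(C_i)=C_i$. Saying that $\sigma$ ``simply relabels the list'' does not convert a relation between $C_i$ and $C_{\sigma(i)}$ into a statement about $C_i$ alone; the two coincide only when $C_i=C_{\sigma(i)}$ along each cycle of $\sigma$, which the hypotheses do not give you. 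The step is not merely unjustified: it fails, and the corollary as stated is false once $\sigma\neq\texttt{id}$. Take $q=2$, $n=2$, $\theta=\texttt{id}$, $\sigma=(0\,1)$, $\lambda=1$, $C_0=\{00,10\}$, $C_1=\{00,01\}$. Then $\texttt{T}(C_0)=C_1$ and $\texttt{T}(C_1)=C_0$, so $C=\eta_0C_0\oplus\eta_1C_1$ is skew $\Theta_{\texttt{id},\sigma}$-cyclic, yet $C_0$ is not cyclic. The converse direction fails as well: $C_0=\{00\}$ and $C_1=(\mathbb{F}_2)^2$ are both cyclic, but $\eta_0C_0\oplus\eta_1C_1$ is not skew $\Theta_{\texttt{id},\sigma}$-cyclic.

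The paper offers no argument here (the corollary is declared a ``direct consequence'' of the theorem), so there is nothing to reconcile your proof with; the condition $\texttt{T}_{\theta,\lambda_{\sigma(i)}}(C_i)=C_{\sigma(i)}$ that you actually derived is the correct consequence of the theorem, and the corollary should either be restated in that form or restricted to $\sigma=\texttt{id}$. The latter is the only case the paper uses from this point on (it immediately specializes to $\Theta_{\theta}=\Theta_{\theta,\texttt{id}}$), and there your argument closes with no difficulty, since then $C_{\sigma(i)}=C_i$ and the cardinality step already gives the desired equality.
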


In the sequel, we only consider the automorphism
$\Theta_{\theta}(=\Theta_{\theta, \texttt{id}})$ defined by
\begin{align}\begin{array}{cccc}
\Theta_{\theta}: & R_q& \longrightarrow & R_q\\
& \sum\limits_{i=0}^{q-1}a_i\eta_i & \longmapsto&
\sum\limits_{i=0}^{q-1}\theta(a_i)\eta_i,
\end{array}
\end{align} where $\theta\in\texttt{Aut}(\mathbb{F}_q).$
Now, we give a generator of a skew-constacyclic code over $R_q.$

\begin{theo}
Let $\lambda=\lambda_0\eta_0+ \lambda_1\eta_1 +\cdots+
\lambda_{q-1}\eta_{q-1}\in\mathcal{U}(R_q)$ and $C= \eta_0 C_0
\oplus \eta_1 C_1 \oplus\cdots\oplus \eta_{q-1}C_{q-1}$ be a skew
$\Theta_{\theta}$-$\lambda$-constacyclic code over $R_q.$ Then
there exist polynomials $g_0, g_1,\cdots,g_{q-1} \in
\mathbb{F}_q[x,\theta]$ such that
$$C=\langle\,\eta_0 g_0, \eta_1 g_1,\cdots,\eta_{q-1}g_{q-1}\,\rangle$$ with
$C_i=\langle\,g_i\,\rangle$ in
$\frac{\mathbb{F}_q[x,\theta]}{\langle\,x^n- \lambda_i\,\rangle}.
$
\end{theo}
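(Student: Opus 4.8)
The plan is to use the decomposition $C = \eta_0 C_0 \oplus \eta_1 C_1 \oplus \cdots \oplus \eta_{q-1} C_{q-1}$ together with the structure theory of skew-constacyclic codes over the base field $\mathbb{F}_q$. First I would invoke Corollary \ref{ref1}: since $C$ is skew $\Theta_\theta$-$\lambda$-constacyclic with $\sigma = \texttt{id}$, each component code $C_i$ is a skew $\theta$-$\lambda_i$-constacyclic code of length $n$ over $\mathbb{F}_q$. Then, by the lemma recalled above on the generator structure of skew-constacyclic codes over finite fields, there exists for each $i$ a monic polynomial $g_i \in \mathbb{F}_q[x,\theta]$ of minimal degree in $C_i$ such that $g_i$ is a right divisor of $x^n - \lambda_i$ and $C_i = \langle g_i \rangle$ in $\frac{\mathbb{F}_q[x,\theta]}{\langle x^n - \lambda_i \rangle}$. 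This gives the componentwise generators; the remaining task is to reassemble them into a single generating set for $C$ as a left $R_q[x,\Theta_\theta]$-submodule.

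Next I would show $C = \langle \eta_0 g_0, \eta_1 g_1, \cdots, \eta_{q-1} g_{q-1} \rangle$ in $R_q[x,\Theta_\theta]/\langle x^n - \lambda \rangle$. The inclusion $\supseteq$ is the easy direction: each $\eta_i g_i$ lies in $\eta_i C_i \subseteq C$ (identifying polynomials with codewords via $\varphi$), and $C$ is a left $R_q[x,\Theta_\theta]$-module, so the left ideal generated by the $\eta_i g_i$ is contained in $C$. For the reverse inclusion, take an arbitrary codeword $\mathbf{c} \in C$; writing $\mathbf{c} = \sum_{i=0}^{q-1} \eta_i \mathbf{c}_i$ with $\mathbf{c}_i \in C_i$, and using $C_i = \langle g_i \rangle$, there are polynomials $f_i \in \mathbb{F}_q[x,\theta]$ with $\mathbf{c}_i \equiv f_i g_i \pmod{x^n - \lambda_i}$. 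Then $\mathbf{c} = \sum_i \eta_i f_i g_i$, and since $\eta_i$ is central in $R_q$ and fixed by $\Theta_\theta$ (because $\Theta_\theta(\eta_i) = \eta_i$), we have $\eta_i f_i = f_i^{(\eta_i)} \eta_i$ where $f_i^{(\eta_i)}$ denotes the polynomial $f_i$ with coefficients multiplied by $\eta_i$; hence $\eta_i f_i g_i = (\eta_i f_i)(\eta_i g_i)$ lies in the left ideal generated by $\eta_i g_i$. Summing over $i$ shows $\mathbf{c} \in \langle \eta_0 g_0, \cdots, \eta_{q-1} g_{q-1} \rangle$, completing the proof.

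The main obstacle, and the point that needs care, is the bookkeeping between the two multiplications: the reduction modulo $x^n - \lambda_i$ in each component versus the reduction modulo $x^n - \lambda$ in $R_q[x,\Theta_\theta]$. One must check that $\eta_i(x^n - \lambda) = \eta_i x^n - \eta_i \lambda = \eta_i x^n - \lambda_i \eta_i = \eta_i(x^n - \lambda_i)$ as elements of $R_q[x,\Theta_\theta]$, which is where the decomposition $\lambda = \sum_i \lambda_i \eta_i$ and the identities $\eta_i \eta_j = \delta_{ij}\eta_i$ are used, together with $x \eta_i = \Theta_\theta(\eta_i) x = \eta_i x$ so that $\eta_i$ commutes with $x$. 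This ensures that multiplying a component relation by $\eta_i$ respects both quotients simultaneously, so the congruences lift correctly. Once that compatibility is in hand, everything else is routine manipulation inside the skew polynomial ring, and the identification $\varphi$ between codewords and polynomials transports the conclusion back to the code $C$.
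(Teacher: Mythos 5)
Your proof is correct and follows essentially the same route as the paper's: decompose $C$ into its components $C_i$, invoke the principal-generator lemma for skew $\theta$-$\lambda_i$-constacyclic codes over $\mathbb{F}_q$, and verify both inclusions between $C$ and $\langle\,\eta_0 g_0,\ldots,\eta_{q-1}g_{q-1}\,\rangle$. In fact you are somewhat more careful than the paper, which leaves implicit the compatibility $\eta_i(x^n-\lambda)=\eta_i(x^n-\lambda_i)$ and the fact that $\eta_i$ commutes with $x$ because $\Theta_\theta(\eta_i)=\eta_i$.
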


\begin{Proof} Let $D=\langle\,\eta_0 g_0, \eta_1 g_1,\cdots,\eta_{q-1}g_{q-1}\,\rangle.$ Let
$\textbf{c}(x) \in C$ such that
$\textbf{c}(x)=\sum\limits_{i=0}^{q-1} \eta_i\textbf{c}_i(x)$ with
$\textbf{c}_i \in C_i,~0 \leq i \leq q-1.$ Since
$C_i=\langle\,g_i\,\rangle$ as a left submodule of
$\frac{\mathbb{F}_q[x,\theta]}{\langle\,x^n- \lambda_i\,\rangle}
$; there exist $k_0(x), k_1(x),\cdots,k_{q-1}(x) \in
\mathbb{F}_q[x,\theta]$ such that
$\textbf{c}(x)=\sum\limits_{i=0}^{q-1} \eta_i k_i(x)g_i(x)$.
Therefore $\textbf{c} \in D.$ Reciprocally, let $\textbf{d} \in
D$, there exist $l_0(x), l_1(x),\cdots,l_{q-1}(x) \in
\frac{R_q[x,\Theta_{\theta}]}{\langle\,x^n- \lambda\,\rangle} $
such that $\textbf{d}(x)= \sum\limits_{i=0}^{q-1} \eta_i
l_i(x)g_i(x).$ Then for all $0 \leq i \leq q-1,$ there exists
$a_i(x) \in \mathbb{F}_q[x,\theta]$ such that $\eta_i l_i(x)=
\eta_i a_i(x)$, hence $\textbf{d}(x)=\sum\limits_{i=0}^{q-1}
\eta_i a_i(x)g_i(x)$  and $\textbf{d} \in C.$
\end{Proof}

\begin{cor}\label{ref3}
Under the above assumptions; let  $C= \eta_0 C_0\oplus \eta_1
C_1\oplus\cdots\oplus\eta_{q-1}C_{q-1}$ be a skew
$\Theta_{\theta}$-$\lambda$-constacyclic code over $R_q$ such that
$$C=\langle\,\eta_0 g_0, \eta_1
g_1,\cdots,\eta_{q-1}g_{q-1}\,\rangle.$$ Then $C$ is principally
generated with $C=\langle\,g(x)\,\rangle$, where $g(x)=
\eta_0g_0(x)+\eta_1g_1(x)+\cdots+\eta_{q-1}g_{q-1}(x).$ Moreover
$g(x)$ is a right divisor of $x^n- \lambda$ in $R_q[x,
\Theta_{\theta}].$
\end{cor}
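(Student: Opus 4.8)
The plan is to prove Corollary~\ref{ref3} in two stages: first establish that the single element $g(x)=\eta_0g_0(x)+\eta_1g_1(x)+\cdots+\eta_{q-1}g_{q-1}(x)$ already generates $C$, and second verify that $g(x)$ is a right divisor of $x^n-\lambda$ in $R_q[x,\Theta_\theta]$. For the first stage, I would use the orthogonality relations $\eta_i\eta_j=\delta_{ij}\eta_i$ together with the fact, established earlier, that each $\eta_i$ is central in $R_q$ (it lies in $\mathbb{F}_q[\overline v]\subseteq R_q$ and $\Theta_\theta$ fixes it, since $\Theta_\theta$ uses $\sigma=\texttt{id}$, so $\eta_i$ commutes with $x$ in $R_q[x,\Theta_\theta]$ as well). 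Concretely, $\eta_i\cdot g(x)=\eta_i g_i(x)$ because $\eta_i\eta_j g_j(x)=0$ for $j\neq i$; hence each generator $\eta_i g_i(x)$ of $C$ (in the description from the preceding theorem) lies in $\langle g(x)\rangle$, giving $C\subseteq\langle g(x)\rangle$. Conversely $g(x)=\sum_i \eta_i g_i(x)\in C$ because each $\eta_i g_i(x)\in C$ by that same theorem, so $\langle g(x)\rangle\subseteq C$. This yields $C=\langle g(x)\rangle$.

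For the second stage, I would argue componentwise. By Corollary~\ref{ref1} (with $\sigma=\texttt{id}$), each $C_i$ is a skew-$\theta$-$\lambda_i$-constacyclic code of length $n$ over $\mathbb{F}_q$, and since $C_i=\langle g_i\rangle$ with $g_i$ the monic generator polynomial of minimal degree, the lemma on skew-constacyclic codes over $\mathbb{F}_q$ (quoted above) gives that $g_i(x)$ is a right divisor of $x^n-\lambda_i$ in $\mathbb{F}_q[x,\theta]$: there exists $h_i(x)\in\mathbb{F}_q[x,\theta]$ with $x^n-\lambda_i=h_i(x)g_i(x)$. Now set $h(x):=\sum_{i=0}^{q-1}\eta_i h_i(x)\in R_q[x,\Theta_\theta]$. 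Using again that the $\eta_i$ are central orthogonal idempotents with $\sum_i\eta_i=1$ and $\lambda=\sum_i\lambda_i\eta_i$, I compute
\begin{align*}
h(x)g(x)=\Bigl(\sum_{i}\eta_i h_i(x)\Bigr)\Bigl(\sum_{j}\eta_j g_j(x)\Bigr)=\sum_{i}\eta_i h_i(x)g_i(x)=\sum_i\eta_i(x^n-\lambda_i)=x^n-\lambda,
\end{align*}
where in the last step one uses $\sum_i\eta_i x^n=x^n$ (central idempotents, $x^n$ commutes) and $\sum_i\eta_i\lambda_i=\lambda$. Hence $x^n-\lambda=h(x)g(x)$, so $g(x)$ is a right divisor of $x^n-\lambda$ in $R_q[x,\Theta_\theta]$.

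The only subtlety worth being careful about — and the step I expect to be the main obstacle in writing the argument cleanly — is the manipulation $\eta_i h_i(x)\,\eta_j g_j(x)=\eta_i\eta_j\,h_i(x)g_j(x)$ in the skew polynomial ring: it relies on each $\eta_j$ being fixed by $\Theta_\theta$ (so that $x\eta_j=\eta_j x$ in $R_q[x,\Theta_\theta]$) and being central in $R_q$, which is exactly why we restricted to $\sigma=\texttt{id}$ in this section. With that observation recorded once, the cross terms vanish because $\eta_i\eta_j=0$ for $i\neq j$, and the diagonal terms collapse via $\eta_i^2=\eta_i$. Everything else is the orthogonal-idempotent bookkeeping already used repeatedly in Section~\ref{sec:4}, so no genuinely new idea is needed beyond invoking the corresponding field-level lemma for each $C_i$.
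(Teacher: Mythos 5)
Your proposal is correct and follows essentially the same route as the paper: show $\eta_i g(x)=\eta_i g_i(x)$ to get both inclusions, then lift the field-level factorizations $x^n-\lambda_i=h_ig_i$ to $x^n-\lambda=\bigl(\sum_i\eta_i h_i\bigr)g$ using the orthogonal idempotents. Your explicit remark that the computation hinges on $\Theta_\theta(\eta_i)=\eta_i$ (so $\eta_i$ commutes with $x$ in $R_q[x,\Theta_\theta]$) is a point the paper leaves implicit, and is worth stating.
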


\begin{Proof} It is clear that $\langle\,g(x)\,\rangle \subseteq C.$ Reciprocally, we have
$\eta_i g(x)= \eta_ig_i(x)$ for all $0 \leq i \leq q-1,$ which
implies that $C \subseteq \langle\,g(x)\,\rangle,$ whence
$C=\langle\,g(x)\,\rangle.$ Since $g_i$ is a right divisor of
$x^n- \lambda_i$, for all $0 \leq i \leq q-1,$ there exists
$h_i(x) \in \mathbb{F}_q[x, \theta]$ such that $x^n- \lambda_i=
h_i g_i.$ Since $\eta_i(x^n - \lambda)= \eta_i(x^n - \lambda_i)$
therefore
\begin{eqnarray*}
  \left(\sum\limits_{i=0}^{q-1} \eta_i h_i\right)\left(\sum\limits_{i=0}^{q-1} \eta_i g_i\right) &=& \sum\limits_{i=0}^{q-1} \eta_i h_i g_i= \sum\limits_{i=0}^{q-1} \eta_i(x^n- \lambda_i) \\
    &=& \sum\limits_{i=0}^{q-1} \eta_i(x^n- \lambda) \\
    &=& x^n- \lambda,
\end{eqnarray*}
which implies that $\sum\limits_{i=0}^{q-1} \eta_i h_i$ is a right
divisor of $x^n-\lambda$ over $R_q.$ \end{Proof}

By a similar work, and by Lemma \ref{dual} we have the
corresponding result for skew-dual codes.

\begin{lem}
Under the above assumptions; let  $C= \eta_0 C_0\oplus \eta_1
C_1\oplus\cdots\oplus\eta_{q-1}C_{q-1}$ be a skew
$\Theta_{\theta}$-$\lambda$- constacyclic code over $R_q$ such
that $C=\langle\,\eta_0 g_0, \eta_1
g_1,\cdots,\eta_{q-1}g_{q-1}\,\rangle$ with $g_i=x^{n- k_i}+
\sum_{j=0}^{k_i}a_{ij}x^{j}.$ Let $\lambda_i^{*}
=\frac{\theta^n(a_{i0})}{a_{i0}\theta^{n-k_i}(\lambda_i)}$, for
all $0 \leq i \leq q-1$ and $\lambda^{*}= \sum\limits_{i=0}^{q-1}
\eta_i \lambda_i^{*}.$ Then $C^{\perp }$ is a
skew-$\Theta_{\theta}$-$\lambda^{*}$-constacyclic codes over $R_q$
such that $C^{\perp }=\langle\,h^{*}(x)\,\rangle,$ where
$h^{*}(x)=\sum\limits_{i=0}^{q-1} \eta_i h_i^{*}$ and
$h_0,h_1,\cdots,h_{q-1}$ are skew monic polynomials of
$\mathbb{F}_q[x,\theta]$  such that
$x^{n}-\theta^{-k_i}(\lambda_i)=g_i h_i$ in $\mathbb{F}_q[x,
\theta].$ Moreover $h^{*}(x)$ is a right divisor of $x^n-
\lambda^{*}$ in $R_q[x, \Theta_{\theta}].$
\end{lem}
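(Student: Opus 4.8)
The plan is to reduce the statement over $R_q$ to the componentwise statements over $\mathbb{F}_q$ already recorded in Lemma \ref{dual}, using the idempotent decomposition $C = \eta_0 C_0 \oplus \cdots \oplus \eta_{q-1} C_{q-1}$ together with Proposition \ref{pro:Cself} and Corollary \ref{ref1}. So the first step is to apply Lemma \ref{dual} to each constituent code $C_i = \langle\,g_i\,\rangle$, which is a skew-$\theta$-$\lambda_i$-constacyclic code of length $n$ over $\mathbb{F}_q$ generated by the monic polynomial $g_i = x^{n-k_i} + \sum_{j=0}^{k_i} a_{ij} x^j$ with $a_{i0}\neq 0$. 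Lemma \ref{dual} then yields $C_i^{\perp} = \langle\,h_i^{*}\,\rangle$, a skew-$\theta$-$\lambda_i^{*}$-constacyclic code over $\mathbb{F}_q$, where $\lambda_i^{*} = \theta^{n}(a_{i0})/\bigl(a_{i0}\,\theta^{n-k_i}(\lambda_i)\bigr)$, $h_i$ is the monic polynomial of degree $k_i$ with $x^n - \theta^{-k_i}(\lambda_i) = g_i h_i$, and $h_i^{*}$ is a right divisor of $x^n - \lambda_i^{*}$ in $\mathbb{F}_q[x,\theta]$.

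Next I would assemble these over $R_q$. By Proposition \ref{pro:Cself}, $C^{\perp} = \eta_0 C_0^{\perp} \oplus \cdots \oplus \eta_{q-1} C_{q-1}^{\perp} = \eta_0\langle\,h_0^{*}\,\rangle \oplus \cdots \oplus \eta_{q-1}\langle\,h_{q-1}^{*}\,\rangle$. Setting $\lambda^{*} = \sum_{i=0}^{q-1}\eta_i\lambda_i^{*}$, note that $\lambda^{*}$ is a unit of $R_q$ since each $\lambda_i^{*}$ is nonzero in $\mathbb{F}_q$; and by Corollary \ref{ref1}, the fact that each $C_i^{\perp}$ is skew-$\theta$-$\lambda_i^{*}$-constacyclic with $\sigma = \texttt{id}$ says precisely that $C^{\perp}$ is skew-$\Theta_{\theta}$-$\lambda^{*}$-constacyclic over $R_q$. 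Then the already-proved Corollary \ref{ref3}, applied to $C^{\perp} = \langle\,\eta_0 h_0^{*},\ldots,\eta_{q-1}h_{q-1}^{*}\,\rangle$, gives that $C^{\perp}$ is principally generated by $h^{*}(x) := \eta_0 h_0^{*} + \cdots + \eta_{q-1}h_{q-1}^{*}$, and that $h^{*}(x)$ is a right divisor of $x^n - \lambda^{*}$ in $R_q[x,\Theta_{\theta}]$. This last divisibility also follows directly from $\eta_i h^{*}(x) = \eta_i h_i^{*}(x)$ and $x^n - \lambda^{*} = \sum_i \eta_i(x^n - \lambda_i^{*})$ exactly as in the proof of Corollary \ref{ref3}.

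The only genuinely delicate point is bookkeeping about the notation $h_i^{*}$ versus $h_i^{\natural}$ and the exact form of $h_i$: in Lemma \ref{dual} the factorization used to define $h$ is $x^n - \theta^{-k}(\lambda) = g(x)h(x)$, and one must check that $h$ obtained this way is indeed monic (which holds because $g$ is monic and the left-hand side is monic in the skew polynomial ring, so the quotient is monic) and that $h^{*}$ is well-defined as stated. I would therefore spend a sentence verifying that for each $i$ the constant term of $g_i$ is a unit of $\mathbb{F}_q$ — immediate since $x^n - \theta^{-k_i}(\lambda_i) = g_i h_i$ forces the constant term of $g_i$ to be nonzero — so Lemma \ref{dual} applies verbatim. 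Beyond that the argument is a routine transcription of Lemma \ref{dual} through the isomorphism $\phi$ and Corollaries \ref{ref1} and \ref{ref3}, so I expect no real obstacle; the main care is simply to keep the indices $k_i$, the automorphism powers $\theta^{n-k_i}$, and the twist units $\lambda_i^{*}$ consistent across all $q$ components.
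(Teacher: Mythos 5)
Your proposal is correct and is exactly the argument the paper intends: the paper gives no written proof beyond the remark that the result follows ``by a similar work, and by Lemma \ref{dual},'' and your componentwise application of Lemma \ref{dual} to each $C_i=\langle g_i\rangle$, reassembled through Proposition \ref{pro:Cself}, Corollary \ref{ref1} (with $\sigma=\texttt{id}$), and Corollary \ref{ref3}, is precisely that ``similar work'' made explicit. Your side remarks (monicity of each $h_i$, $a_{i0}\neq 0$, and the fact that the $\eta_i$ commute with $x$ since $\Theta_\theta$ fixes them) correctly fill the only details the paper leaves implicit.
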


Now we characterize self-dual skew
$\Theta_{\theta}$-$\lambda$-constacyclic codes of length $n$ over
$R_q.$

\begin{theo}\label{ref2}
Let $C$ be a linear code of even length $n$ over $R_q$ and
$\lambda$ be a unit of $R_q.$ If $C$ is a self-dual skew
$\Theta_{\theta}$-$\lambda$-constacyclic code of length $n$ over
$R_q$ then either $C$ is skew $\Theta_{\theta}$-cyclic or  skew
$\Theta_{\theta}$-negacyclic.
\end{theo}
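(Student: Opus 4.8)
The plan is to reduce the statement about $R_q$ to the component codes $C_i$ over $\mathbb{F}_q$ and then analyze each component via the reciprocal-polynomial description of duals. First I would write $C=\eta_0 C_0\oplus\eta_1 C_1\oplus\cdots\oplus\eta_{q-1}C_{q-1}$ and write $\lambda=\sum_{i=0}^{q-1}\lambda_i\eta_i$. By Corollary \ref{ref1}, $C$ being skew $\Theta_\theta$-$\lambda$-constacyclic means each $C_i$ is skew $\theta$-$\lambda_i$-constacyclic of length $n$ over $\mathbb{F}_q$ (here $\sigma=\texttt{id}$). By Proposition \ref{pro:Cself}, $C$ self-dual is equivalent to each $C_i$ being self-dual of length $n$ over $\mathbb{F}_q$. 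So it suffices to show: if a self-dual skew $\theta$-$\lambda_i$-constacyclic code of even length $n$ over $\mathbb{F}_q$ exists, then $\lambda_i\in\{1,-1\}$; and moreover that the $\lambda_i$ are forced to all be equal (either all $1$ or all $-1$) so that $\lambda=1$ (cyclic) or $\lambda=-1$ (negacyclic) globally in $R_q$.

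The component analysis is the heart of the argument. Fix $i$ and suppose $C_i=\langle g_i\rangle$ with $g_i$ monic of degree $n-k_i$, $g_i(x)=x^{n-k_i}+\sum_{j=0}^{n-k_i-1}a_j x^j$, a right divisor of $x^n-\lambda_i$, and $a_0\neq 0$. By Lemma \ref{dual}, $C_i^\perp$ is skew $\theta$-$\lambda_i^{*}$-constacyclic with $\lambda_i^{*}=\frac{\theta^n(a_0)}{a_0\theta^{n-k_i}(\lambda_i)}$. If $C_i$ is self-dual then $C_i=C_i^\perp$, which forces $\lambda_i^{*}=\lambda_i$, i.e. $\theta^n(a_0)=a_0\theta^{n-k_i}(\lambda_i)\lambda_i$. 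Self-duality also forces $k_i=n/2$ (equal dimensions) and, comparing generators $g_i$ and $h_i^{*}$, one gets $g_i=h_i^{*}$ where $x^n-\theta^{-k_i}(\lambda_i)=g_i h_i$. I would extract from these identities a constraint pinning down $\lambda_i$: evaluating the constant terms and using that $g_i(0)=a_0$, $h_i(0)=-\theta^{-k_i}(\lambda_i)/a_0$ (from $x^n-\theta^{-k_i}(\lambda_i)=g_i h_i$ and $\theta^{-k_i}$ of the leading relation), together with $h_i^{*}(0)$ expressed via $\theta^{k_i}$ applied to the leading coefficient of $h_i$, one derives $\lambda_i^2=1$ after tracking the $\theta$-twists. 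The cleanest route is probably: self-duality $\Rightarrow$ $C_i\subseteq C_i^\perp$ $\Rightarrow$ $x^n-\lambda_i$ divides (on the appropriate side) $x^n-\lambda_i^{*}$ in $\mathbb{F}_q[x,\theta]$, and comparing leading/constant coefficients of two monic degree-$n$ polynomials that divide each other gives $\lambda_i=\lambda_i^{*}$ and then $\lambda_i^2=1$; since $n$ is even the twist $\theta^{n-k_i}=\theta^{n/2}$ interacts cleanly.

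For the global conclusion I would argue that the $\lambda_i$ must actually all coincide. Here I would use that $\Theta_\theta$ fixes each idempotent $\eta_i$ and that $x^n-\lambda$ must admit a right divisor $g(x)=\sum\eta_i g_i(x)$ over $R_q$ (Corollary \ref{ref3}), but more directly: the existence of a self-dual skew $\theta$-$\lambda_i$-constacyclic code of even length over $\mathbb{F}_q$ with $\lambda_i=-1$ requires $p\neq 2$ (else $-1=1$), while for $\lambda_i=1$ no such restriction is needed; the point is that whichever value each $\lambda_i$ takes, we land in $\{1,-1\}$, so $\lambda=\sum\lambda_i\eta_i$ has all coefficients in $\{1,-1\}$. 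If the paper intends ``$C$ is skew cyclic or skew negacyclic'' to mean each component is cyclic or negacyclic (equivalently $\lambda$ is an idempotent combination of $\pm 1$), then we are done; if it means $\lambda=1$ or $\lambda=-1$ uniformly, I would add the observation that over $\mathbb{F}_2$ both coincide, and over odd characteristic one notes that mixing is excluded because... (this is the step I expect to be the main obstacle, since a priori different components could carry different $\lambda_i$). I would resolve it by invoking that in characteristic $2$ there is nothing to prove, and in odd characteristic the self-dual constraint $\lambda_i^2=1$ with the additional requirement that $g_i$ be self-reciprocal in the twisted sense forces a single consistent choice; failing a slicker argument, I would simply state the theorem in the componentwise form, which is what the reduction naturally yields, and remark that $\lambda\in\{1,-1\}$ in the ring $R_q$ precisely when all $\lambda_i$ agree. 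The bulk of the write-up is the coefficient-chasing in the component step; everything else is bookkeeping through the decomposition $R_q\cong\prod\mathbb{F}_q$.
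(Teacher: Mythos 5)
Your reduction to the components is exactly the paper's: Proposition \ref{pro:Cself} gives that $C$ is self-dual iff every $C_i$ is, and Corollary \ref{ref1} gives that each $C_i$ is skew $\theta$-$\lambda_i$-constacyclic. Your component step --- showing each $\lambda_i\in\{1,-1\}$ by coefficient-chasing through Lemma \ref{dual} --- is plausible but left as a sketch; note that the single identity $\lambda_i=\lambda_i^{*}$, i.e. $\lambda_i\,a_0\,\theta^{n/2}(\lambda_i)=\theta^{n}(a_0)$, does not by itself yield $\lambda_i^2=1$ without further input from the self-reciprocity of the generator, which is precisely the content of \cite[Proposition 5]{BU3} that the paper simply cites. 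That part could likely be repaired or replaced by the citation.

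The genuine gap is the global step, and you identify it yourself but do not close it: a priori some components could be cyclic ($\lambda_i=1$) while others are negacyclic ($\lambda_i=-1$), in which case $\lambda$ is neither $1$ nor $-1$ in $R_q$ and the theorem as stated fails. Your proposed fallback --- restating the theorem componentwise --- proves a strictly weaker statement than the one asked. The missing ingredient is \cite[Proposition 3]{B}: over a fixed $\mathbb{F}_q$ there cannot simultaneously exist a self-dual skew $\theta$-cyclic code and a self-dual skew $\theta$-negacyclic code of the same dimension. Since all the $C_i$ here are self-dual of the same length $n$ (hence the same dimension $n/2$), this forces all components to be of one type, so $\lambda=1$ or $\lambda=-1$ uniformly. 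Without this (or an equivalent nonexistence argument), the proof does not establish the stated conclusion.
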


\begin{Proof} Let $C= \eta_0 C_0 \oplus \eta_1 C_1 \oplus \cdots \oplus \eta_{q-1}C_{q-1}$ be
a skew $\Theta_{\theta}$-$\lambda$-constacyclic code of length $n$
over $R_q.$ It is follows from Proposition \ref{pro:Cself} that
$C$ is self-dual code over $R_q$ if and only if $C_0,
C_1,\cdots,C_{q-1}$ are self-dual codes over $\mathbb{F}_q.$ From
Corollary \ref{ref1}, $C$ is a skew
$\Theta_{\theta}$-$\lambda$-constacyclic code of length $n$ over
$R_q$ if and only if $C_i$ is a skew
$\theta$-$\lambda_i$-constacyclic code of length $n$ over
$\mathbb{F}_q$; for all $0 \leq i \leq q-1.$ But the only
self-dual skew $\theta$-constacyclic codes over $\mathbb{F}_q$ are
self-dual skew $\theta$-cyclic codes and self-dual skew
$\theta$-negacyclic codes \cite[Proposition 5]{BU3}. Moreover,
from  \cite[Proposition 3]{B}, there cannot exist both a self-dual
skew $\theta$-cyclic code and a self-dual skew $\theta$-negacyclic
code with the same dimension over $\mathbb{F}_q.$ Then all $C_i$
are skew $\theta$-cyclic codes or skew $\theta$-negacyclic codes
of length $n.$ Whence $C$ is a self-dual skew
$\Theta_{\theta}$-cyclic code or a self-dual skew
$\Theta_{\theta}$-negacyclic code of length $n$ over $R_q.$
\end{Proof}

The following result gives necessary and sufficient conditions for
the existence of self-dual skew cyclic and self-dual skew
negacyclic codes over a finite field \cite[Propositions 4 and
5]{B}.

\begin{pro}
Let $\mathbb{F}_{q}$ be a finite field with characteristic $p,$
which is an odd prime and $q=p^r.$ Let $\theta$ be an
$\mathbb{F}_{p^s}$-automorphism of $\mathbb{F}_q,$ where $s$
divides $r.$
\begin{enumerate}
\item If $q \equiv 1 \mod 4$, then:
\begin{itemize}
\item[$i)$] there exists a self-dual skew cyclic code of dimension
$k$ over $\mathbb{F}_{q}$ if and only if $p \equiv 3 \mod 4$, $r$
is even and $s \cdot k$ is  odd.
\item [$ii)$]there exists a
self-dual skew negacyclic code of dimension $k$ over
$\mathbb{F}_{q}$ if and only if $p \equiv 1 \mod 4$ or $p \equiv 3
\mod 4$, $r$ is even and $s \cdot k$ is even.
\end{itemize}
\item  If $q \equiv 3 \mod 4$, then:
\begin{itemize}
\item [$i)$] There does not exist a self-dual skew cyclic code of
dimension $k$ over $\mathbb{F}_{q}.$
\item [$ii)$]There exists a
self-dual skew negacyclic code of dimension k over
$\mathbb{F}_{q}$ if and only if $k \equiv 0 \mod 2^{\mu-1}$, where
$\mu \geq 2$ is the biggest integer such that $2^{\mu}$ divides
$p+1.$
\end{itemize}
\end{enumerate}

\end{pro}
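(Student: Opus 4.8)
Since this proposition is a restatement of \cite[Propositions 4 and 5]{B}, I would follow the argument of \cite{B}; here is the plan.

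\textbf{Step 1 --- reduce to a skew-Hermitian factorization.} A self-dual skew $\theta$-$\lambda$-constacyclic code necessarily has even length $n=2k$ and dimension $k$, and --- exactly as was already used in the proof of Theorem \ref{ref2} --- the only admissible units are $\lambda=1$ and $\lambda=-1$, i.e. the cyclic and the negacyclic case, which are genuinely distinct because $p$ is odd. First I would write such a code as $C=\langle\,g\,\rangle$ with $g$ monic of degree $k$, a right divisor of $x^{2k}-\varepsilon$ where $\varepsilon\in\{1,-1\}$; since $\theta$ fixes the prime subfield, $\theta^{-k}(\varepsilon)=\varepsilon$, so Lemma \ref{dual} gives $C^{\perp}=\langle\,h^{\natural}\,\rangle$ with $x^{2k}-\varepsilon=g\,h$ for a monic $h$ of degree $k$. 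Two monic right divisors of $x^{2k}-\varepsilon$ of the same degree that generate the same left ideal in $\mathbb{F}_q[x,\theta]/\langle\,x^{2k}-\varepsilon\,\rangle$ must coincide, so $C=C^{\perp}$ is equivalent to $g=h^{\natural}$, that is, to the existence of a monic $h\in\mathbb{F}_q[x,\theta]$ of degree $k$ with
$$ h^{\natural}\,h = x^{2k}-\varepsilon . $$
Thus the proposition becomes the determination of exactly when such an $h$ exists, for $\varepsilon=1$ (cyclic) and for $\varepsilon=-1$ (negacyclic).

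\textbf{Step 2 --- pass to scalar conditions via the centre.} Let $\rho$ be the order of $\theta$, so that $\rho=r/s$ and the centre of $\mathbb{F}_q[x,\theta]$ is $\mathbb{F}_{p^s}[x^{\rho}]$. Using the product rule for the skew reciprocal from \cite{BU3}, I would check that $h\mapsto h^{\natural}$ is a degree-preserving, order-reversing involution fixing $x^{2k}-\varepsilon$ up to a unit. Comparing constant terms in $h^{\natural}h=x^{2k}-\varepsilon$ forces the constant term $b_0$ of $h$ to satisfy $\theta^{k}(b_0)=-\varepsilon\,b_0$, equivalently $b_0^{\,p^{sk}-1}=-\varepsilon$; then matching the remaining coefficients, together with the fact that the factorization of $x^{2k}-\varepsilon$ in $\mathbb{F}_q[x,\theta]$ is controlled through the centre by a commutative factorization over $\mathbb{F}_{p^s}$, should reduce the existence of $h$ to (a) solvability over $\mathbb{F}_q$ of $b_0^{\,p^{sk}-1}=-\varepsilon$, and (b) a combinatorial condition letting the relevant factors of $x^{2k}-\varepsilon$ be split into two halves of total degree $k$.

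\textbf{Step 3 --- the arithmetic case analysis, which is the main obstacle.} It then remains to decide (a) and (b) according to $p\bmod 4$, the parity of $r$ and the parity of $sk$. For $\varepsilon=1$: if $q\equiv 3\bmod 4$ then $-1$ is a non-square in $\mathbb{F}_q$ and (a)--(b) cannot both hold, so there is no self-dual skew cyclic code (part (2)(i)); if $q\equiv 1\bmod 4$ one unwinds the conditions to exactly ``$p\equiv 3\bmod 4$, $r$ even and $sk$ odd'' (part (1)(i)). For $\varepsilon=-1$: when $q\equiv 1\bmod 4$ solvability holds precisely in the complementary range, namely $p\equiv 1\bmod 4$, or $p\equiv 3\bmod 4$ with $r$ even and $sk$ even (part (1)(ii)); when $q\equiv 3\bmod 4$ the roots of $x^{2k}+1$ have $2$-power order bounded by the $2$-part of $p+1$, so the required degree-$k$ self-reciprocal splitting is possible exactly when $2^{\mu-1}\mid k$, where $2^{\mu}$ exactly divides $p+1$ (part (2)(ii)). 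Steps 1 and 2 are essentially formal; the hard part is this last step --- tracking how solvability of $b_0^{\,p^{sk}-1}=-\varepsilon$ over $\mathbb{F}_{p^r}$ (with the $\mathbb{F}_{p^s}$-automorphism $\theta$) interacts simultaneously with $p\bmod 4$, the parity of $r$ and the parity of $sk$, and, in the $q\equiv 3\bmod 4$ negacyclic subcase, isolating the precise $2$-adic divisibility $2^{\mu-1}\mid k$ through the orders of the roots of $x^{2k}+1$. The full bookkeeping is carried out in \cite{B}.
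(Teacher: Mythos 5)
The paper offers no proof of this proposition at all: it is imported verbatim from \cite[Propositions 4 and 5]{B}, with the sentence preceding it serving as the citation. Your outline is a faithful reconstruction of the strategy of \cite{B} using machinery already in the paper (reduction via Lemma \ref{dual} to the factorization $h^{\natural}h=x^{2k}-\varepsilon$ with $\varepsilon=\pm1$, the constant-term constraint $b_0^{\,p^{sk}-1}=-\varepsilon$, and the arithmetic case analysis), and it defers to \cite{B} exactly what the paper itself defers, so it is consistent with the paper's treatment.
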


From Theorem \ref{ref2}, the previous result is extend naturally
to the ring $R_q.$

\begin{pro}
Let $\mathbb{F}_{q}$ be a finite field with characteristic $p,$
which is an odd prime and $q=p^r.$ Let $\theta$ be an
$\mathbb{F}_{p^s}$-automorphism of $\mathbb{F}_q,$ where $s$
divides $r.$
\begin{enumerate}
\item If $q \equiv 1 \mod 4$, then:
\begin{itemize}
\item[$i)$] there exists a self-dual skew $\Theta_{\theta}$-cyclic
code of dimension $k$ over $R_{q}$ if and only if $p \equiv 3 \mod
4$, $r$ is even and $s \cdot k$ is  odd.
\item [$ii)$]there exists
a self-dual skew $\Theta_{\theta}$-negacyclic code of dimension
$k$ over $R_{q}$ if and only if $p \equiv 1 \mod 4$ or $p \equiv 3
\mod 4$, $r$ is even and $s\cdot k$ is even.
\end{itemize}
\item  If $q \equiv 3 \mod 4$, then:
\begin{itemize}
\item [$i)$] There does not exist a self-dual skew
$\Theta_{\theta}$-cyclic code of dimension $k$ over $R_{q}.$
\item [$ii)$]There exists a self-dual skew
$\Theta_{\theta}$-negacyclic code of dimension k over $R_{q}$ if
and only if $k \equiv 0 \mod 2^{\mu-1}$, where $\mu \geq 2$ is the
biggest integer such that $2^{\mu}$ divides $p+1.$
\end{itemize}
\end{enumerate}
\end{pro}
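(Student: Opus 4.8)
The plan is to reduce the statement over $R_q$ to the corresponding statement over the finite field $\mathbb{F}_q$ via the decomposition $C=\eta_0 C_0\oplus\eta_1 C_1\oplus\cdots\oplus\eta_{q-1}C_{q-1}$, and then invoke the previous Proposition together with Theorem~\ref{ref2}. First I would recall from Corollary~\ref{ref1} that $C$ is a skew $\Theta_\theta$-$\lambda$-constacyclic code over $R_q$ (with $\lambda=\lambda_0\eta_0+\cdots+\lambda_{q-1}\eta_{q-1}$, and here $\sigma=\texttt{id}$) if and only if each $C_i$ is a skew $\theta$-$\lambda_i$-constacyclic code of length $n$ over $\mathbb{F}_q$; and from Proposition~\ref{pro:Cself} that $C$ is self-dual over $R_q$ if and only if every $C_i$ is self-dual over $\mathbb{F}_q$. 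So a self-dual skew $\Theta_\theta$-cyclic code of dimension $k$ over $R_q$ exists precisely when there is a self-dual skew $\theta$-cyclic code of dimension $k$ over $\mathbb{F}_q$ (take all the $C_i$ equal to such a field code; conversely each component of an $R_q$-code of this type is such a field code).

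Second, I would make explicit why the dimension parameter transfers correctly. By Theorem~\ref{thm:gen} and Corollary~\ref{cor:para}, the rank of $C$ over $R_q$ equals the common dimension $k$ of the components when all $C_i$ have the same dimension; so "dimension $k$ over $R_q$" in the statement should be read via $\texttt{rank}_{R_q}(C)=k$, and a self-dual code over $R_q$ forces all $|C_i|=q^{n/2}$, hence $\dim_{\mathbb{F}_q}C_i=n/2=k$ for every $i$. Thus the existence question for a self-dual skew $\Theta_\theta$-cyclic (resp. negacyclic) code of dimension $k$ over $R_q$ is literally equivalent to the existence question for a self-dual skew $\theta$-cyclic (resp. negacyclic) code of dimension $k$ over $\mathbb{F}_q$. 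Plugging this equivalence into each of the four cases of the preceding Proposition gives exactly the four claimed statements, with the arithmetic conditions on $p$, $r$, $s$, $k$, and $\mu$ carried over verbatim.

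Third, Theorem~\ref{ref2} is what guarantees that the constacyclic shift parameter $\lambda$ in the self-dual case is forced to be either $1$ (cyclic) or $-1$ (negacyclic), so that the dichotomy in the target Proposition is exhaustive: there is nothing else a self-dual skew $\Theta_\theta$-$\lambda$-constacyclic code can be. One should also note, using \cite[Proposition~3]{B} as in the proof of Theorem~\ref{ref2}, that the cyclic and negacyclic cases cannot both occur for the same dimension, so the two bullets within each numbered case are genuinely alternatives and the characterizations do not overlap.

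The main obstacle I anticipate is purely bookkeeping: being careful that the notion of "dimension $k$" over the non-chain ring $R_q$ is the right one (the rank in the sense of \cite{DL}, equivalently the common component dimension in the self-dual case) so that the reduction to $\mathbb{F}_q$ is dimension-preserving, and checking that $\lambda_i$ are the appropriate field units (cyclic corresponds to all $\lambda_i=1$, negacyclic to all $\lambda_i=-1$, which is exactly what Theorem~\ref{ref2} and the Chinese-remainder description $\lambda=\sum\eta_i\lambda_i$ deliver). There is no deep new argument here; once the correspondence $C\leftrightarrow(C_0,\dots,C_{q-1})$ is set up, the proposition is an immediate corollary of Theorem~\ref{ref2} and the field-level Proposition.
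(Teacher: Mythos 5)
Your proposal is correct and follows essentially the same route the paper intends: the paper gives no detailed argument beyond asserting that the result ``extends naturally'' from the field case via Theorem~\ref{ref2}, and your reduction through Corollary~\ref{ref1} and Proposition~\ref{pro:Cself} (each $C_i$ self-dual skew $\theta$-$\lambda_i$-constacyclic of dimension $k=n/2$, with the converse obtained by taking all $C_i$ equal) is precisely that intended reduction, made explicit. Your care about identifying ``dimension $k$'' over $R_q$ with the common component dimension is a worthwhile clarification the paper omits.
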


\begin{exa}
Let $\mathbb{F}_4=\{0,1,\alpha, \alpha^2\}$ be a finite field with
$\alpha^2+\alpha+1=0.$ Let $\theta$ be the Frobenius automorphism
$(a \longmapsto a^2)$ over $\mathbb{F}_4.$ We use the technique
given in \cite{BU3} to construct self-dual skew-$\theta$-cyclic
codes of length 6 over $\mathbb{F}_4.$ The factorization of
$y^3-1$ into irreducible monic polynomials over $\mathbb{F}_2$ is
given by: $y^3-1=(y+1)(y^2+y+1),$ and we set $f_1(y):=y+1$ and
$f_2(y):=y^2+y+1.$ We have also $f_1^{\natural}(y)=f_1(y)$ and
$f_2^{\natural}(y)=f_2(y).$

Let $\mathcal{G}_1:=\{g \in \mathbb{F}_4[x, \theta]:
g^{\natural}g= x^2+1\}$  and $\mathcal{G}_2:=\{g \in
\mathbb{F}_4[x, \theta]: g^{\natural}g= x^4+x^2+1\}.$ We compute
$\mathcal{G}_1$ and $\mathcal{G}_2$ by using Gröbner basis and
Magma language system \cite{BCP}:
$$\mathcal{G}_1:=\{x+1\};~\mathcal{G}_2:=\{x^2+x+1, x^2+ \alpha, x^2+\alpha^2\}.$$
The polynomial $g_1=\texttt{lclm}(x+1, x^2+x+1)=x^3+1$ generates a
$[6,3,2]$-self-dual skew $\theta$-cyclic code  and
$g_2=\texttt{lclm}(x+1, x^2+\alpha^2)=x^3+ \alpha x^2+ \alpha
x+1$; $g_3=\texttt{lclm}(x+1, x^2+\alpha)=x^3+ \alpha^2 x^2+
\alpha^2 x+1$ generate $[6,3,3]$-self-dual skew $\theta$-cyclic
codes over $\mathbb{F}_4.$

 Let $\eta_0=\overline{v}^3+1; \eta_1= \overline{v}^3+\overline{v}+1;
\eta_2=\overline{v}^3+\alpha \overline{v}^2+\alpha^2\overline{v};
\eta_4=\overline{v}^3+\alpha^2\overline{v}^2+\alpha \overline{v}$
be the complete set of primitive pairwise orthogonal idempotents
of $R_4.$ Let $C= \eta_0 C_0 \oplus \eta_1 C_1 \oplus\eta_2
C_2\oplus\eta_{3}C_3$ with $C_0, C_1,C_2,C_3$ be  the skew
$\theta$-cyclic codes over $\mathbb{F}_4$ generated by $g_1, g_1,
g_2, g_3$ respectively. According to Corollary \ref{ref3}, $C$ is
a $[6,3,2]$-self-dual skew $\Theta_{\theta}$-cyclic code over
$R_4$ generated by $g=\eta_0g_1+\eta_1g_1+\eta_2g_2+\eta_3g_3=
x^3+(\overline{v}^3+\overline{v}^2)x^2+(\overline{v}^3+\overline{v}^2)x+1.$
A generator matrix of $C$ is given by:
$$G:=\left(\begin{array}{cccccc}
1&\overline{v}^3+\overline{v}^2&\overline{v}^3+\overline{v}^2&1&0&0\\
0&1&\overline{v}^3+\overline{v}^2&\overline{v}^3+\overline{v}^2&1&0\\
0&0&1&\overline{v}^3+\overline{v}^2&\overline{v}^3+\overline{v}^2&0
\end{array}
\right). $$
\end{exa}

\section{Conclusion}

In this work, skew constacyclic codes are investigated over the
non-chain ring $R_q.$ The automorphism group of this ring is
given. We defined a linear Gray map and established a connection
between skew constacyclic codes of length $n$ over $R_q$ and skew
multi-twisted codes of length $qn$ over $\mathbb{F}_q.$ We
determined generator polynomials for skew constacyclic codes over
$R_q$ and for their dual codes. We also characterized self-dual
skew constacyclic codes and provided necessary and sufficient
conditions for the existence of self-dual skew cyclic and
self-dual skew negacyclic codes over $R_q.$

\end{document}